\newtheorem{theorem}{Theorem}[section]
\newtheorem{lemma}[theorem]{Lemma}
\newtheorem{proposition}[theorem]{Proposition}
\newtheorem{corollary}[theorem]{Corollary}
\theoremstyle{definition} 
\newtheorem{definition}[theorem]{Definition}
\theoremstyle{remark} 
\newtheorem{remark}[theorem]{Remark}
\theoremstyle{example}
\theoremstyle{example} 
\newtheorem{question}[theorem]{Question}
\newcommand{\lts}{ \mathcal{L}}
\newcommand{\lfp}{\operatorname{lfp}}
\newcommand{\gfp}{\operatorname{gfp}}
\newcommand{\sem}[1] {  \llbracket #1 \rrbracket  }  
\newcommand{\gsem}[1]{\llparenthesis{\,#1\,}\rrparenthesis}
\newcommand{\Dual}{\mathrm{negate}}
\newcommand{\LTL}{\ensuremath{\textbf{LTL}}}
\newcommand{\CTL}{\ensuremath{\textbf{CTL}}}
\newcommand{\CTLstar}{\ensuremath{\textbf{CTL}^*}}
\newcommand{\PCTL}{\ensuremath{\textbf{PCTL}}}
\newcommand{\PCTLstar}{\ensuremath{\textbf{PCTL}^*}}
\newcommand{\Lukmu}{\ensuremath{\text{\L}\mu}}
\newcommand{\qLmu}{\ensuremath{\text{qL}\mu}}
\newcommand{\strongand}{\odot}
\newcommand{\strongor}{\oplus}
\newcommand{\weakand}{\sqcap}
\newcommand{\weakor}{\sqcup}
\newcommand{\Mu}[2]{\mu #1.\, #2}
\newcommand{\Nu}[2]{\nu #1.\, #2}
\newcommand{\negate}[1]{\overline{#1}}
\newcommand{\GLE}[2]{#1 \vdash #2}
\newcommand{\Implies}{\rightarrow}
\newcommand{\dominates}{\vartriangleright}
\newcommand{\FVI}{\mathrm{FVI}}
\providecommand{\OO}[1]{\mathop{\mathrm{O}}\bigl(#1\bigr)}
\begin{document}

\title{{\Huge {\L}ukasiewicz $\mu$-calculus}}


\author[1]{Matteo Mio}
\author[2]{Alex Simpson}
\affil[1]{CNRS and \'{E}cole Normale Sup\'{e}rieure de Lyon, France}
\affil[2]{Faculty of Mathematics and Physics, University of Ljubljana, Slovenia}
\renewcommand\Authands{ $ \ \ \ \ $}

\date{}
%


\maketitle

\begin{abstract}
The paper explores properties of the \emph{{\L}ukasiewicz $\mu$-calculus}, or \Lukmu\ for short, an extension of {\L}ukasiewicz logic with scalar multiplication and least and greatest fixed-point operators (for monotone formulas). We observe that \Lukmu\  terms, with $n$ variables, define monotone piecewise linear functions from $[0,1]^n$ to $[0,1]$.  
Two effective procedures for calculating the output of \Lukmu\ terms on rational inputs are presented.
We then consider the {\L}ukasiewicz \emph{modal} $\mu$-calculus, which is obtained by adding box and diamond modalities to \Lukmu. Alternatively, it can be viewed as a generalization of Kozen's modal $\mu$-calculus adapted to probabilistic nondeterministic transition systems (PNTS's). We show how properties expressible in the well-known logic \PCTL\ can be encoded as {\L}ukasiewicz modal $\mu$-calculus formulas. We also show that the algorithms for computing values of {\L}ukasiewicz $\mu$-calculus terms provide automatic (albeit impractical) methods for verifying {\L}ukasiewicz modal $\mu$-calculus properties of finite rational PNTS's.
\end{abstract}


\section{Introduction}
\label{section:introduction}

This paper has two distinct motivations. The first is purely logical. The paper
investigates 
certain properties of  a very natural extension of {\L}ukasiewicz many-valued logic, obtained by adding scalar multiplication and least and greatest fixed-point operators for monotone formulas. Whereas, an extension with scalar multiplication has previously been investigated (for example in \cite{RMV2011}), the addition of monotone least and greatest fixed-points  has not been systematically studied before. (See Section~\ref{section_lmu}, however, for discussion of related work by 
Spada~\cite{Spada2008}.)

Our second motivation is more practical. We propose {\L}ukasiewicz logic with fixed points as a useful logic for the specification and verification of systems combining nondeterminism and probabilistic behaviour. We now summarize the basis of this perspective.

Among logics for expressing properties of nondeterministic (including concurrent) processes, represented as transition systems, Kozen's modal $\mu$-calculus~\cite{Kozen83} plays a fundamental r{o}le. It subsumes other temporal logics of processes, such as $\LTL$, $\CTL$ and $\CTLstar$. It does not distinguish bisimilar processes, but separates (finite) non-bisimilar ones. 
More generally, 
by a remarkable result of Janin and Walukiewicz~\cite{JW96},
it is exactly as expressive as 
the bisimulation-invariant fragment of monadic second-order logic. Furthermore, there is an intimate connection with parity games, which offers an intuitive reading of fixed-points, and underpins the existing technology for  model-checking $\mu$-calculus properties. 

For many purposes, it is useful to add probability to the computational model, leading to probabilistic nondeterministic transition systems, cf.~\cite{S95}.
Among the different approaches that have been followed to 
developing analogues of the modal $\mu$-calculus in this setting, the most significant is that introduced independently by Huth and  Kwiatkowska~\cite{HM96} and
by Morgan and McIver~\cite{MM97}, under which a \emph{quantitative} interpretation is given, with formulas denoting values in $[0,1]$.
This quantitative setting permits several variations. In particular, three different quantitative extensions of conjunction from booleans to $[0,1]$ (with  $0$ as false and $1$ as true) arise naturally~\cite{HM96}: 
minimum, $\min(x,y)$; multiplication, $xy$; and the 
strong conjunction (a.k.a.\ {\L}ukasiewicz t-norm)
from {\L}ukasiewicz fuzzy logic, $\max( x+y-1, \, 0)$. 
In each case, there is a dual operator 
giving a corresponding extension of disjunction: maximum, $\max(x,y)$; comultiplication, 
$x+y-xy$; and {\L}ukasiewicz strong disjunction,  $\min(x+y,\, 1)$. The choice of $\min$ and $\max$ for conjunction and disjunction  is particularly natural, since the corresponding quantitative modal $\mu$-calculus, called $\qLmu$ in \cite{MM07},
has an interpretation in terms of two-player \emph{stochastic} parity games, which extends the usual parity-game interpretation of the ordinary modal $\mu$-calculus. 
This allows the real number denoted by a formula to be understood as the \emph{value} of the associated game~\cite{MM07,MIO2012a}.

The present paper contributes to a programme of ongoing research, one of whose
overall aims is to investigate the extent to which quantitative modal $\mu$-calculi
play as fundamental a r{o}le in the probabilistic setting as that
of Kozen's modal $\mu$-calculus in the nondeterministic setting. 
The logic $\qLmu$, with  min/max as conjunction/disjunction, is insufficiently expressive. For example, it cannot 
encode the standard probabilistic
temporal logic  \PCTL\ of~\cite{BA1995}.
Nevertheless, richer calculi can be obtained by augmenting $\qLmu$ with the 
other alternatives for conjunction/disjunction, to be used in combination with $\max$ and $\min$. Such extensions  were investigated by the first author in~\cite{MIO2012b,MioThesis}, where 
the game-theoretic interpretation was generalized to accommodate the new operations.

In this paper, we study the power of least and greatest
fixed points in the presence of both the weak
conjunction and disjunction ($\min$ and $\max$---written as 
$\weakand$ and $\weakor$) as well as  the {\L}ukasiewicz strong connectives
(written as $\strongand$ and $\strongor$) in combination. In addition, as
already mentioned we include a basic operation for
multiplying the value of a formula by a rational constant in $[0,1]$.
We call the resulting
 logic \emph{{\L}ukasiewicz  
$\mu$-calculus}. 
We condsider both a basic version with just the operations mentioned above, and a modal version with added $\Box$ and $\Diamond$ modalities. 


As a first contribution of this paper, we consider the problem of
computing the value of a (non-modal) {\L}ukasiewicz $\mu$-calculus formula. 
We describe two approaches to this. The first is indirectly obtained by a reduction to the decision problem of the first order theory of rational linear arithmetic, which admits quantifier elimination. As a consequence of this property, {\L}ukasiewicz $\mu$-calculus terms always denote piecewise linear (generally discontinuous) functions. The second algorithm adapts the 
approximation-based approach to nested fixed-point calculation to our quantitative setting.

As a second contribution, we show that the the well-known probabilistic temporal logic \PCTL~\cite{BA1995} can be encoded in the modal version of
{\L}ukasiewicz $\mu$-calculus. A similar translation was originally given in 
the first author's PhD thesis~\cite{MioThesis}.
In this journal presentation of the result, we streamline the treatment slightly by removing the connectives of \emph{independent product} and its dual from the target $\mu$-calculus.
The translation  provides some evidence that 
{\L}ukasiewicz modal $\mu$-calculus provides an expressive logic for specifying properties of systems combining probabilic choice  and nondeterminism. 

Finally, we show how the computation of the value of a {\L}ukasiewicz modal $\mu$-calculus formula in a finite probabilistic transition system can be reduced to the problem of computing the value of a formula in the modality-free 
calculus. This, in theory, allows the algorithms we consider for computing such values to be brought to bear on the quantitative model-checking problem for {\L}ukasiewicz modal $\mu$-calculus. Also, again in theory, it subsumes the problem of model-checking $\PCTL$. 
However, whereas the model-checking problem for $\PCTL$ is known to be polynomial-time solvable in the size of formula and model \cite[{\S}10.6.2]{BaierKatoenBook}, the algorithms considered in the present paper have abysmal complexity, at least on paper. To what extent this can be improved is an interesting topic for future work. 

In any case, the purpose of the present paper is not to provide a practical algorithmic framework. Rather, it is to explore something of the richness of 
the combination of {\L}ukasiewicz logic with fixed points, with the idea that this provides a conceptually clean basis for the specification of probabilistic systems, and one that is worthy of further investigation as a potential tool 
for verification. 

\section{{\L}ukasiewicz $\mu$-calculus}\label{section_lmu}

This paper concerns {\L}ukasiewicz many-valued (``fuzzy'') logic, see, e.g.,  \cite{HJ98,MundiciBook}. The main purpose is to study fixed-point extensions of {\L}ukasiewicz logic. As discussed in Section~\ref{section:introduction}, our motivation for considering such extensions comes from an interest in specification logics for probabilistic systems. For this reason, we consider only the standard interpretation of the connectives in the interval $[0,1]$, rather than, say, in arbitrary MV-algebras \cite{MundiciBook}. In this context, it is natural also to extend formulas with a construct for ``scalar multiplication'' by a real $r$. (Such an extension with scalar multiplication is studied in \cite{RMV2011}, see also \cite{Gerla2001,Gerla2001b}, where it is shown to be the logical counterpart of so-called Riesz MV-algebras.)

This leads to a syntax of formulas 
generated by the following grammar (we do not try to be minimal),
\[
t \, ::= \, 
x \mid \underline{0}  \ \mid \underline{1} \mid 
r \, t  \mid t \weakor t \mid t \weakand t \mid t \strongor t \mid t \strongand t \ | \ \neg t \enspace ,
\]
where $x$ ranges over a countably infinite set of variables, and  $r$ ranges over reals in $[0,1]$.  Henceforth, we call the $t$ specified by the grammar above \emph{terms} rather than formulas.

We write $t(x_1, \dots, x_n)$ to mean that all variables of $t$ are contained in $\{x_1, \dots, x_n\}$. Such a term is interpreted as  a function of type $[0,1]^n\rightarrow[0,1]$, by inductively defining the \emph{value} $t(\vv{r})$ of $t(x_1, \dots, x_n)$ applied to a vector 
$(r_1, \dots, r_n) \!\in\! [0,1]^n$ as follows:
\begin{align*}
x_i(\vv{r}) & = r_i \ \ \ \ \ \ \   \underline{0}(\vv{r})  = 0 \ \ \ \ \ \ \ \underline{1}(\vv{r})  = 1& \textnormal{variables and constants}
\\
(q\, t)(\vv{r}) &  = q \cdot  t({\vv{r}}) & \textnormal{scalar multiplication}
\\
(t_1 \weakor t_2)(\vv{r}) & = \max(t_1(\vv{r}), t_2(\vv{r})) & \textnormal{{\L}ukasiewicz weak disjunction}
\\
(t_1 \weakand t_2)(\vv{r}) & = \min(t_1(\vv{r}), t_2(\vv{r}))& \textnormal{{\L}ukasiewicz weak conjunction}
\\
(t_1 \strongor t_2)(\vv{r}) & = \min(t_1(\vv{r})+ t_2(\vv{r}),\, 1) & \textnormal{{\L}ukasiewicz strong disjunction}
\\
(t_1 \strongand t_2)(\vv{r}) & = \max(t_1(\vv{r})+ t_2(\vv{r}) - 1, \, 0)& \textnormal{{\L}ukasiewicz strong conjunction}
\\
(\neg t)(\vv{r}) & = 1 - t(\vv{r})& \textnormal{negation}
\end{align*}

One way of extending \L ukasiewicz logic with least and greatest fixed points has previously been considered in  \cite{Spada2008}. This is based on the observation that, since any term $t(x_1,\dots,x_n,y)$ denotes a continuous function, the existence of a solution $y$ to the equation 
$
y  =  t\big(x_1\dots x_n,y\big)
$
is guaranteed by the Brouwer fixed-point theorem. Because the space of all solutions is closed, 
it is thus possible to define the semantics of a least-fixed-point expression $\mu y. t(x_1\dots x_n,y)$, with $ t(x_1\dots x_n,y)$ an arbitrary {\L}ukasiewicz term, as the smallest solution; and similarly for greatest-fixed-point expressions $\mu y. t(x_1\dots x_n,y)$. For example the term $\mu x. \neg x$ denotes, under this approach, the rational number $\frac{1}{2}$ (as does $\nu x. \neg x$). However, solutions of fixed-point expressions need not be themselves continuous. For example, $\mu y. (x \oplus y)$ denotes (see Proposition \ref{semantics_threshold1} below) the discontinuous function
\begin{equation}
\label{equation:discontinuous}
f(x) \; = \;   \begin{cases}
 						1 & \text{if $ x > 0$}  \\
						0 & \text{otherwise} \\
						\end{cases}      \enspace .
\end{equation}
As a consequence, the term $\mu x. \neg \big( \mu y. (x \oplus y )\big)$ cannot be given meaning because  the equation $x = 1- f(x)$ does not have solutions. For this reason, the fixed-point extension of \L ukasiewicz logic of \cite{Spada2008} does not admit nesting of fixed-point operators.

In this paper we follow an alternative approach to the assignment of values to fixed-point expressions, which allows unrestricted nesting of fixed-point operators.
For this, we base our treament of fixed points on the 
order-theoretic Knaster-Tarski fixed-point theorem, rather than on the topological Brouwer fixed-point theorem. This approach has been very productive in the development of $\mu$-calculi, see, e.g.,~\cite{Kozen83,Stirling96}. As we shall show, it also leads to a rich and interesting theory
in the context of \L ukasiewicz logic.

As is necessary for the application of the Knaster-Tarski fixed-point theorem, 
we consider fixed points of monotone functions only. This is implemented syntactically by considering only terms without negation. 
The syntax of \emph{{\L}ukasiewicz $\mu$-terms} (or just \emph{$\mu$-terms} for short) is thus specified by the grammar:
\[
t \, ::= \, 
x \mid \underline{0}  \ \mid \underline{1} \mid 
r \, t  \mid t \weakor t \mid t \weakand t \mid t \strongor t \mid t \strongand t  \mid \Mu{x}{t} \mid \Nu{x}{t}
\]
where, as expected, the $\mu$ and $\nu$ operators bind their variables. We write $t(x_1\dots x_n)$ to specify that the \emph{free variables} occurring in $t$ are contained in $\{x_1,\dots, x_n\}$. 
We say that a $\mu$-term $t$ is \emph{rational} if all scalars $r$ appearing in scalar multiplications in $t$  are rational numbers.

The \emph{value} $t(\vv{r})$ of a  $\mu$-term $t(x_1, \dots, x_n)$ applied to a vector 
$(r_1, \dots, r_n) \!\in\! [0,1]^n$ is specified by extending the definition of value of {\L}ukasiewicz terms as follows: 
\begin{align*}
\big(\Mu{y}{t(x_1\dots x_n,y})\big)(\vv{r}) & = \lfp (r^\prime \mapsto  t(\vv{r},r^\prime))
\\
\big(\Nu{y}{t(x_1\dots x_n,y})\big)(\vv{r}) & = \gfp (r^\prime \mapsto  t(\vv{r},r^\prime))
\end{align*}
\noindent
where $\lfp$ and $\gfp$ are the operators returning the least and greatest fixed point respectively. These fixed points exist, because $t(x_1, \dots, x_n)$ always defines a monotone function from $[0,1]^n$ to $[0,1]$ (by routine induction on $t$). Note that the values of $\mu x.x$ and $\nu x.x$ are $0$ and $1$ respectively. Thus the constants $\underline{0}$ and $\underline{1}$ of {\L}ukasiewicz logic are derivable  as $\mu$-terms. Given a real $r\!\in\![0,1]$, we write $\underline{r}$ for the formula $r\, \underline{1}$ whose value is $r$.

As is customary in fixed-point logics, we have presented terms in positive form, i.e., without negations. However, an operation $\Dual(t)$ can be defined on 
 terms by replacing every connective with its dual, replacing $(q\, t)$ with $(q\, \Dual(t)) \oplus \underline{1-q} $,
and using the definitions $\Dual(\Mu{x}{t}) = \Nu{x}{\Dual(t)}$ and 
$\Dual(\Nu{x}{t}) = \Mu{x}{\Dual(t)}$ for fixed points. 
It is simple to verify that, for a term $t(x_1, \dots, x_n)$, it holds that
$\Dual(t)(\vv{r})= 1-t(1-\vv{r})$ , for all $\vv{r} \in [0,1]^n$.
This defines negation for \emph{closed} terms $t$, since for such terms the equality 
$\Dual(t) = 1-t$ holds between values.
 
 The next definition introduces a few useful macro formulas.
\begin{definition}\label{thresholds_encoding}
For every $\mu$-term $t$ define:
\begin{center}
\begin{tabular}{l l l l}
$\mathbb{P}_{>0}t= \mu y. (y \oplus t)$ &  
$\mathbb{P}_{=1}t= \nu y. (y \odot t) $&
$ \mathbb{P}_{> r}t=   \mathbb{P}_{>0}(t \odot \underline{1-r})$&
$\mathbb{P}_{\geq r}t=   \mathbb{P}_{=1}(     t \oplus \underline{1-r})$ \\
\end{tabular}
\end{center}
where $r\!\in\! (0,1)$ and the variable $y$ does not appear in $t$.  We write $\mathbb{P}_{\rtimes r}t$, for $r\!\in\! [0,1]$, to denote one of the four cases. The derived operators $\mathbb{P}_{\rtimes r}$ are called \emph{threshold modalities}.
\end{definition}

\begin{proposition}\label{semantics_threshold1}
Given a $\mu$-term $t(x_1\dots x_n)$ the following equality holds:
\begin{center}
$(\mathbb{P}_{\rtimes r}t)(\vv{r})= \left\{     \begin{array}{l  l}
 						1 & $if $t(\vv{r}) \rtimes r \\
						0 & $otherwise$\\
						\end{array}      \right.$
\end{center}
\end{proposition}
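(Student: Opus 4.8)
The plan is to reduce the statement to two direct fixed-point computations. Fix a $\mu$-term $t(x_1\dots x_n)$ and an input vector $\vv{r}\in[0,1]^n$, and write $a = t(\vv{r})\in[0,1]$ for the value being thresholded; since the bound variable $y$ does not occur in $t$, in each of $\mathbb{P}_{>0}t$ and $\mathbb{P}_{=1}t$ the function whose fixed point is taken is genuinely a unary function of $y$ on $[0,1]$, with $a$ a constant parameter. I would first settle the two ``primitive'' cases $\mathbb{P}_{>0}$ and $\mathbb{P}_{=1}$ by computing their least and greatest fixed points explicitly, and then obtain $\mathbb{P}_{>r}$ and $\mathbb{P}_{\geq r}$ by plugging the appropriate subterm into the primitive cases.

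For $\mathbb{P}_{>0}t = \mu y.(y\oplus t)$ the value at $\vv{r}$ is $\lfp(g)$ with $g(y)=\min(y+a,\,1)$. If $a=0$ then $g$ is the identity, so $\lfp(g)=0$, as required since $0>0$ is false. If $a>0$, I would observe that $g$ has a \emph{unique} fixed point: a fixed point $y$ with $y+a\le 1$ would satisfy $y=y+a$, forcing $a=0$; hence $y+a>1$ and $y=g(y)=1$. So $\lfp(g)=1$, matching $a>0$. For $\mathbb{P}_{=1}t=\nu y.(y\odot t)$ the computation is dual: the value is $\gfp(h)$ with $h(y)=\max(y+a-1,\,0)$; if $a=1$ then $h$ is the identity and $\gfp(h)=1$; if $a<1$ the only fixed point is $0$ (a fixed point $y$ with $y+a-1\ge 0$ would force $a=1$), so $\gfp(h)=0$. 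One could instead derive the second computation from the first via the $\Dual$ operation defined earlier, but the direct argument is shorter.

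For the derived cases I would just evaluate the substituted subterm and appeal to the primitive cases. Using $(t\odot\underline{1-r})(\vv{r}) = \max(a+(1-r)-1,\,0) = \max(a-r,\,0)$, the $\mathbb{P}_{>0}$ case gives $(\mathbb{P}_{>r}t)(\vv{r})=1$ iff $\max(a-r,0)>0$ iff $a>r$, and $0$ otherwise. Using $(t\oplus\underline{1-r})(\vv{r}) = \min(a+(1-r),\,1)$, the $\mathbb{P}_{=1}$ case gives $(\mathbb{P}_{\geq r}t)(\vv{r})=1$ iff $\min(a+1-r,\,1)=1$ iff $a+1-r\ge 1$ iff $a\ge r$, and $0$ otherwise. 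This exhausts the four cases.

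I do not expect a real obstacle: the argument is elementary arithmetic on $[0,1]$ together with the definition of value. The one point deserving a little care is the claim that in the non-degenerate regimes ($a>0$ for $\mathbb{P}_{>0}$, $a<1$ for $\mathbb{P}_{=1}$) the relevant function has a \emph{unique} fixed point — it is precisely this uniqueness that pins down the least fixed point in the first case and the greatest fixed point in the second, so that no explicit ordinal approximation is needed. Everything else is routine.
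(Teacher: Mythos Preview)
Your argument is correct. The paper states this proposition without proof, treating it as a routine observation, so there is no authorial proof to compare against; your explicit fixed-point computations for $\mathbb{P}_{>0}$ and $\mathbb{P}_{=1}$, followed by the reduction of $\mathbb{P}_{>r}$ and $\mathbb{P}_{\geq r}$ to these via evaluation of $t\odot\underline{1-r}$ and $t\oplus\underline{1-r}$, fill in exactly the details one would expect.
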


\noindent
Therefore $\mu$-terms generally represent discontinuous functions. 

The main goal of this section is to establish the following fundamental properties of the {\L}ukasiewicz $\mu$-calculus. First, any 
$\mu$-term $t(x_1, \dots, x_n)$ defines a piecewise linear function 
in $[0,1]^n \to [0,1]$. Second, if the term is rational, then so are the linear pieces; by which we mean that the coefficients and constants in the linear expression for each piece are rational, as are those in the linear expressions specifying the constraints defining the domain of the piece. In particular, if $t$ is a closed rational $\mu$-term, then its value is a rational number. Finally, we show that this rational number is computable from $t$.

In this section, we formalise and prove these results via
a simple reduction to the first-order theory of linear arithmetic.
After this, in Section~\ref{section:algorithm}, we provide an alternative
approach based on a direct iterative algorithm for computing least and greatest fixed points of monotone piecewise linear functions.

A \emph{linear expression} in variables $x_1, \dots, x_n$ is
an expression 
\[
q_1 x_1 + \dots + q_n x_n + q
\]
where $q_1, \dots, q_n, q$ are real numbers. We say that
the linear expression is \emph{rational} when all of $q_1, \dots, q_n, q$ are
rational numbers. (The choice of the letter $q$ reflects on the fact that our primary interest will be in the rational case.)
We write $e(x_1, \dots, x_n)$ if $e$ is a linear expression in $x_1, \dots, x_n$, in which case, given real numbers $r_1, \dots, r_n$, we write $e(\vv{r})$ for the value of the expression when the variables $\vv{x}$ take values $\vv{r}$.
We also make use of the closure of linear expressions under substitution: given $e(x_1, \dots, x_n)$ and $e_1(y_1, \dots, y_m), \dots, e_n (y_1, \dots, y_m)$,
we write $e(e_1, \dots, e_n)$ for the evident substituted expression in variables $y_1,\dots, y_m$ (which is defined formally by multiplying out and adding coefficients).

The first-order 
theory of \emph{linear arithmetic} has 
linear expressions as terms, and strict and non-strict 
inequalities between linear expressions,
\begin{equation}
\label{eqn:inequalities}
e_1 < e_1 \qquad e_1 \leq e_2 \enspace ,
\end{equation}
as atomic formulas. 
Equality can be expressed as the conjunction of two non-strict inequalities and the negation of an atomic formula can itself be expressed as an atomic formula. 
The truth of a first-order formula is given via its interpretation in the reals.
The theory of \emph{rational} linear arithmetic is defined identically, but with terms restricted to rational linear expressions.

Both linear arithmetic and rational linear arithmetic enjoy quantifier elimination; see, e.g., \cite{Ferrante1975}. As a consequence, they are model complete. Thus, in the case of rational linear arithmetic, the rational numbers form a model, and the inclusion of the rationals into the reals is an elementary embedding.

\begin{proposition}
\label{proposition:mu-term:arithmetic}
For every {\L}ukasiewicz $\mu$-term $t(x_1, \dots, x_n)$, 
its graph  
\[
\{(\vv{x},y) \in [0,1]^{n+1} \mid t(\vv{x}) = y\}
\] is definable by a formula $F_t(x_1, \dots, x_n, y)$ in the first-order theory of linear arithmetic. In the case that $t$ is rational, it holds that  $F_t$ is a rational formula and $F_t$ is computable from $t$. Furthermore,  if $u$ is the length of the $\mu$-term $t$ and $v$ is the number of fixed points constructors in $t$, the
length of $F_t$ is bounded by $2^v u c$, for some constant $c$.
\end{proposition}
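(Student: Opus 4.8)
The plan is to proceed by structural induction on the $\mu$-term $t$, constructing for each $t(x_1,\dots,x_n)$ a formula $F_t(x_1,\dots,x_n,y)$ of linear arithmetic defining its graph, and tracking both the rationality of the coefficients and the length of $F_t$. For the base cases, $F_{x_i}$ is $y = x_i$, $F_{\underline 0}$ is $y = 0$, $F_{\underline 1}$ is $y = 1$, each of length $\OO{1}$. For scalar multiplication and the four binary connectives, the idea is to take the inductively-given $F_{t_1}$ and $F_{t_2}$ and ``wire together'' their outputs using fresh existentially quantified variables: for instance $F_{t_1 \strongor t_2}(\vv x, y)$ is $\exists y_1 \exists y_2.\, F_{t_1}(\vv x, y_1) \wedge F_{t_2}(\vv x, y_2) \wedge \big( (y = y_1 + y_2 \wedge y_1 + y_2 \le 1) \vee (y = 1 \wedge y_1 + y_2 \ge 1) \big)$, and similarly using $\max$/$\min$ cases for $\weakor,\weakand$, and the $\max(\cdot - 1, 0)$ pattern for $\strongand$; scalar multiplication by $q$ is $\exists y'.\, F_t(\vv x, y') \wedge y = q y'$, which is rational exactly when $q$ is. Each such step adds only a constant amount of syntax on top of $F_{t_1}$ and $F_{t_2}$, so the length grows by an additive constant and these constructors contribute the factor $u$ (term length) but no doubling.

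The interesting cases are the fixed points. Here I would exploit that linear arithmetic admits quantifier elimination: given $F_t(\vv x, z, y)$ defining the graph of $t(x_1,\dots,x_n,z)$ as a function of $(\vv x, z)$, the least fixed point in $z$ of $z \mapsto t(\vv x, z)$ can be characterised by the first-order condition ``$y$ is a fixed point, and every fixed point is $\ge y$'':
\[
F_{\mu z.t}(\vv x, y) \; \equiv \; F_t(\vv x, y, y) \;\wedge\; \forall z.\, \big( F_t(\vv x, z, z) \rightarrow y \le z \big) \enspace ,
\]
and dually for $\nu$ with $z \le y$. Since the Knaster--Tarski least fixed point of a monotone self-map of $[0,1]$ is precisely the least element of the (nonempty, closed) set of fixed points, this formula is correct; and since $F_t$ defines a function, $F_t(\vv x, z, z)$ is the pre-/post-fixed point condition as needed. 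The catch is that $F_{\mu z.t}$ as written contains two copies of $F_t$, so a naive recursion would make the length grow like $2^u$ rather than $2^v$. To get the stated bound, after forming $F_{\mu z.t}$ I would immediately apply quantifier elimination to the inner $\forall z$, collapsing the formula to a quantifier-free (or at least bounded-size) equivalent before it is used by any enclosing constructor. Provided we have a bound stating that quantifier elimination applied to a formula of length $\ell$ with a single leading quantifier produces an equivalent formula of length at most $c'\ell$ for a fixed constant $c'$ (this is the point where I would invoke a concrete complexity bound for the Ferrante--Rackoff / Fourier--Motzkin procedure over dense linear orders with the relevant rational arithmetic), each fixed-point constructor at most multiplies the current length by a constant, so with $v$ fixed points nested we get the claimed $2^v u c$.

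Rationality and computability come along for free: every rule above turns rational $F_{t_i}$ into rational $F_t$ (the only scalars introduced are the $q$'s from scalar multiplications, which are rational by hypothesis, together with the integer constants $0,1$), quantifier elimination over rational linear arithmetic preserves rationality of coefficients, and every step — including the quantifier-elimination subroutine — is an effective syntactic transformation, so $F_t$ is computable from $t$. I expect the main obstacle to be pinning down the length bookkeeping in the fixed-point case precisely: one must be careful that the ``eliminate immediately'' strategy genuinely keeps the blow-up multiplicative-by-a-constant per fixed point rather than, say, multiplicative in the number of free variables, and that the constant $c$ in $2^v u c$ can be chosen uniformly. The non-fixed-point reasoning is entirely routine, and the correctness of the fixed-point encodings is an immediate consequence of Knaster--Tarski together with the fact that $F_t$ defines a (single-valued, total) function on $[0,1]^{n+1}$.
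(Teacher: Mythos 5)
Your construction of $F_t$ is the same as the paper's: identical formulas for the connectives (fresh existentials wiring together the subformula outputs) and the identical fixed-point encoding $F_{t'}(\vv{x},y,y) \wedge \forall z.\,(F_{t'}(\vv{x},z,z) \rightarrow y \leq z)$, justified by Knaster--Tarski exactly as in the paper. The one place you go wrong is the length bookkeeping for fixed points, and the detour you take there is both unnecessary and unsound. First, the ``naive'' recursion already yields the stated bound: writing $L(t)$ for the length of $F_t$, the recurrences are $L(t_1 \bullet t_2) \leq L(t_1)+L(t_2)+c$ for the connectives and $L(\sigma z.\,t') \leq 2L(t')+c$ for fixed points, and a routine induction gives $L(t) \leq 2^v u c$ --- the doubling occurs only at fixed-point constructors, so the exponent is the number $v$ of fixed points, not the term length $u$; your fear of $2^u$ growth is misplaced. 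Second, the fix you propose --- eliminating the inner $\forall z$ immediately after each fixed point and invoking a bound that single-quantifier elimination inflates length only by a constant factor --- rests on a false premise: Ferrante--Rackoff (and Fourier--Motzkin) elimination of one quantifier from a formula with $m$ atomic subformulas can produce on the order of $m^2$ atoms, so iterating it through $v$ nested fixed points would not obviously stay within $2^v u c$ and in any case requires an argument you do not supply. Dropping the interleaved quantifier elimination entirely, and replacing the hand-waving there with the two-line recurrence above, turns your proposal into precisely the paper's proof.
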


\begin{proof}
The proof is a straightforward induction on the structure of $t$. We consider two cases, in order to illustrate the simple  manipulations  used in the construction of $F_t$.

If $t$ is $t_1 \strongor t_2$ then $F_t(\vv{x},y)$ is the formula
\[
\exists z_1, z_2 .\: F_{t_1}(\vv{x},z_1) \, \wedge \,
  F_{t_2}(\vv{x},z_2) \, \wedge \,
  \left((z_1 + z_2 \leq 1 \wedge y = z_1 + z_2) \, \vee \, 
    (1\leq z_1 + z_2  \wedge y = 1)
  \right)
\]
\noindent
If $l_1,l_2$ are the lengths of the formulas $F_{t_1}, F_{t_2}$ respectively, then the length of $F_{t_1 \strongor t_2}$  is $l \: \leq \: l_1 + l_2 + c$.

If $t$ is $\Mu{x_{n+1}}{t'}$ then $F_t$ is the formula
\[
F_{t'}(x_1, \dots, x_n, y,y) \, \wedge \, \forall z.\: F_{t'}(x_1, \dots, x_n, z,z) \Implies y \leq z \enspace .
\]
If $l'$ is the length of $F_{t'}$ then the 
length of $F_{\Mu{x_{n+1}}{t'}}$ is $l \: \leq \: 2l' + c$.

\end{proof}
\noindent

Proposition~\ref{proposition:mu-term:arithmetic} provides the following method of computing the value $t(\vv{q})$ of  rational \L ukasiewicz $\mu$-calculus term $t(x_1, \dots, x_n)$ at a rational vector $(q_1, \dots, q_n) \in [0,1]^n$. First construct $F_t(x_1, \dots, x_n,y)$. Next, perform quantifier elimination to obtain an equivalent 
quantifier-free formula $G_t(x_1, \dots, x_n,y)$, and consider its 
instantiation $G_t(q_1, \dots, q_n,y)$ at $\vv{q}$. (Alternatively, obtain 
an equivalent formula $G^{\vv{q}}_t(y)$ by performing quantifier elimination
on $F_t(q_1, \dots, q_n,y)$.) By performing obvious simplifications of atomic formulas in one variable, $G_t(q_1, \dots, q_n,y)$ reduces to a boolean combination of inequalities each having one of the following forms
\[
y \leq q \qquad y < q \qquad y \geq q \qquad y > q \enspace .
\]
By the correctness of $G_t$ there must be a unique rational satisfying the Boolean
combination of constraints, and this can be extracted in a straightforward way from $G_t(q_1, \dots, q_n,y)$. The quantifier-elimination procedure of~\cite{Ferrante1975}, when given a formula of length $l$ as input produces a formula of length at most $2^{dl}$ as output and takes time at most $2^{2^{d'l}}$,  for some constants $d$ and $d'$.
Thus the length of the formula $G_t(x_1, \dots, x_n,y)$ is
bounded by 
$\OO{2^{2^v u}}$, 
and the computation time for $t(\vv{q})$  is $\OO{2^{2^{2^v u}}}$, using a unit cost model for rational arithmetic. 

Better bounds can be obtained using the decision procedure for linear arithmetic of \cite{BJW2005} which is based on automata theoretic methods. Given the formula $F_t(q_1, \dots, q_n,y)$ of length $l$, one can construct in time $2^{dl}$, for some constant $d$, a deterministic B\"{u}chi automaton $\mathcal{A}$ recognizing the set of ($\omega$-words which are codings of) real numbers $r$ satisfying  $F_t(q_1, \dots, q_n,r)$. Once again, by the correctness of $F_t$, a unique rational number $q$ satisfies the formula $F_t(q_1, \dots, q_n,y)$. The code of $q$ can efficiently be extracted from $\mathcal{A}$. Thus this procedure computes the value of $t(\vv{q})$ in time bounded by $\OO{2^{2^{v}u}}$.

\begin{remark}
The practical efficiency of the procedure can be improved by a more careful translation from $\mu$-terms to linear arithmetic (for example, solving adjacent fixed points of the same kind simultaneously to avoid unnecessary  quantifier alternation). Nevertheless, even with such improvements, the theoretical bounds discussed above are not reduced. It might be possible to improve the doubly exponential bound to a single exponential one by developing an efficient direct translation of $\mu$-terms into the corresponding B\"{u}chi automaton $\mathcal{A}$ (e.g., exploiting the fact that the formula $F_{\mu x.t^\prime}$ contains two identical occurrences of $F_{t^\prime}$) thus avoiding the exponential blow-up required by the intermediate translation into the formula $F_t(q_1, \dots, q_n,y)$. We leave this question open for further research.

\end{remark}

We now elaborate the sense  in which the function $t(x_1, \dots, x_n)$ is piecewise linear. 
A \emph{conditioned linear expression} is  a pair, written 
$\GLE{C}{e}$,
where $e$ is a linear expression, and $C$ is a finite set of strict and non-strict inequalities between linear expressions; i.e., each element of
$C$ has one of the forms in~(\ref{eqn:inequalities}).
We write $C(\vv{r})$ for the conjunction of the inequations obtained by instantiating $\vv{r}$ for 
$\vv{x}$ in $C$. 
The role of a conditioned linear expression, $\GLE{C}{e}$, is to
specify one piece of a piecewise linear function. The domain of the piece is the set of vectors $\vv{r}\,$  for which $C(\vv{r})$ is true. And the expression $e$ specifies the linear function that applies over that domain. 
Note that the domain $\{(r_1, \dots, r_n) \mid C(\vv{r})\}$ is always convex; i.e., if
$C(\vv{r})$ and $C(\vv{s})$ then, for all $\lambda \in [0,1]$, we have
$C(\lambda\vv{r} + (1-\lambda)\vv{s})$.
This fact will be exploited in the sequel. We remark that the domain need not be  open or closed.
It may also have empty interior. 

Let $\mathcal{F}$ be a \emph{system} (i.e., finite set) of conditioned linear expresssions in variables $x_1, \dots, x_n$. We say that $\mathcal{F}$ \emph{represents} a function $f \colon [0,1]^n \to [0,1]$ if the following conditions hold:
\begin{enumerate}
\item For all $r_1, \dots, r_n \in [0,1]$, there exists a conditioned linear expression $(\GLE{C}{e}) \in \mathcal{F}$ such that $C(\vv{r})$ is true, and

\item for all $r_1, \dots, r_n \in [0,1]$, and every conditioned linear expression $(\GLE{C}{e}) \in \mathcal{F}$, if  $C(\vv{r})$ is true then $e(\vv{r}) = f(\vv{r})$.

\end{enumerate}

\noindent
Note that, for two conditioned linear expressions $(\GLE{C_1}{e_1}), (\GLE{C_2}{e_2}) \in \mathcal{F}$, we do not require different conditioning sets $C_1$ and $C_2$ to be disjoint. However, $e_1$ and $e_2$ must agree on any overlap.

Obviously, the function represented by a system of conditioned linear expressions is unique, when it exists. But not every system represents a function. In general, one could impose syntactic conditions on a system to ensure that it represents a function, but we shall not pursue this. 


\begin{definition}
We say that a function $f\colon [0,1]^n \to [0,1]$ is \emph{piecewise linear} if there exists a {system} $\mathcal{F}$ of conditioned linear expresssions in variables $x_1, \dots, x_n$ that represents $f$. We say that $f$ is \emph{rational} piecewise linear if there exists such an $\mathcal{F}$ containing only rational numbers.
\end{definition}
We emphasise that piecewise linear functions, as defined above, need not be continuous (see, e.g., (\ref{equation:discontinuous}) above and 
Section~\ref{section:motivating} for examples). 
The result below, which is presumably folklore, justifies the definition we have given.

\begin{proposition}
\label{proposition:cle:arithmetic}
A function $f\colon [0,1]^n \to [0,1]$ is piecewise linear 
 if and only if its graph 
$\{(\vv{x},y) \in [0,1]^{n+1} \mid f(\vv{x}) = y\}$ is definable by a formula $F(x_1, \dots, x_n, y)$ in the first-order theory of linear arithmetic. 
Similarly, it is rational piecewise linear if and only if its graph is definable by a formula of rational linear arithmetic. In the rational case, 
a defining formula and a representing system of conditioned linear equations can each be computed from the other.
\end{proposition}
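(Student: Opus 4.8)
The plan is to prove the two implications separately, carry the rational case along in parallel, and observe at the end that both directions are explicit syntactic procedures, which gives the mutual computability claim. The ``only if'' direction is routine: if $\mathcal{F} = \{(\GLE{C_1}{e_1}), \dots, (\GLE{C_m}{e_m})\}$ represents $f$, take $F(\vv{x},y)$ to be
\[
\Big(\bigwedge\nolimits_{i=1}^{n} 0 \le x_i \le 1\Big) \wedge 0 \le y \le 1 \wedge \bigvee\nolimits_{j=1}^{m}\Big(\big(\bigwedge\nolimits_{c \in C_j} c\big) \wedge e_j \le y \wedge y \le e_j\Big),
\]
and verify it defines the graph straight from the two clauses of ``represents'': clause~1 makes some disjunct fire at $(\vv{r}, f(\vv{r}))$, and clause~2 forces every firing disjunct to pin $y$ to $f(\vv{r})$. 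This $F$ is quantifier-free and rational whenever $\mathcal{F}$ is.

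For the ``if'' direction, suppose the graph of $f$ is defined by a formula $F(\vv{x},y)$ of (rational) linear arithmetic. First I would apply quantifier elimination --- available, and rationality-preserving, for this theory --- to make $F$ quantifier-free; conjoin the (harmless) cube constraints $0 \le x_i \le 1$, $0 \le y \le 1$; and rewrite in disjunctive normal form, turning negated atoms into atoms. This gives $F \equiv \bigvee_j D_j$ with each $D_j$ a finite conjunction of linear inequalities in $\vv{x},y$ (rational in the rational case). Being a disjunct of $F$, each $D_j$ implies $F$, so the polyhedron $P_j = \{(\vv{x},y) \mid D_j(\vv{x},y)\}$ sits inside the graph of $f$ and inside $[0,1]^{n+1}$; hence whenever $D_j(\vv{r},y)$ holds for some $y$ (with $\vv{r} \in [0,1]^n$) that $y$ is unique and equal to $f(\vv{r})$. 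I would then isolate the inequalities of $D_j$ that mention $y$: dividing each by the nonzero coefficient of $y$ turns it into $y \bowtie \gamma(\vv{x})$ with $\gamma$ linear and $\bowtie \in \{<,\le,>,\ge\}$, and the cube constraints contribute in particular the non-strict bounds $y \ge 0$ and $y \le 1$. For each $D_j$ and each \emph{non-strict} such bound $\gamma_{j,k}$, put into $\mathcal{F}$ the conditioned linear expression $\GLE{C_{j,k}}{\gamma_{j,k}}$, where $C_{j,k}$ is the set of linear inequalities in $\vv{x}$ obtained by substituting $y := \gamma_{j,k}(\vv{x})$ throughout $D_j$. All of this is rational whenever $F$ is.

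The substance of the argument, and the step I expect to need the most care, is checking that this $\mathcal{F}$ represents $f$. Clause~2 is easy: if $C_{j,k}(\vv{r})$ holds then $(\vv{r}, \gamma_{j,k}(\vv{r}))$ satisfies $D_j$, hence $F$, hence lies on the graph, so $\gamma_{j,k}(\vv{r}) = f(\vv{r})$ --- which also shows any two pieces agree where their domains overlap. For clause~1, fix $\vv{r} \in [0,1]^n$; then $(\vv{r}, f(\vv{r}))$ satisfies some $D_j$, and by the uniqueness above the set of $y$ with $D_j(\vv{r},y)$ is exactly $\{f(\vv{r})\}$. The key point is that this singleton, being the intersection of the finitely many half-lines $\{y \mid y \bowtie_k \gamma_{j,k}(\vv{r})\}$, must have its single value attained by one of the \emph{non-strict} bounds $\gamma_{j,k}$ --- were the extreme bound strict, the intersection would be open or empty, not a singleton. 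For that $k$, $\gamma_{j,k}(\vv{r}) = f(\vv{r})$, and substituting back shows $C_{j,k}(\vv{r})$ holds, as required. (Disjuncts $D_j$ with $P_j = \emptyset$ contribute only pieces with empty domain, which are harmless.) Finally, both directions are performed by effective syntactic procedures, so each of $F$ and $\mathcal{F}$ is computable from the other.
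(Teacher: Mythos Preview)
Your proof is correct and follows essentially the same approach as the paper: quantifier elimination followed by DNF for the ``if'' direction, then extracting conditioned linear expressions by substituting non-strict $y$-bounds back into each disjunct (the paper writes out this substitution explicitly as the constraint set $C,\{b_j > a_i\},\{b_j \geq b_i\},\{b_j \leq c_i\},\{b_j < d_i\}$). The only cosmetic differences are that the paper uses a conjunction of implications $(\bigwedge C) \rightarrow y = e$ rather than your disjunction for the ``only if'' direction, and restricts to the lower non-strict bounds $y \geq b_j$ rather than all non-strict bounds, yielding a slightly smaller representing system.
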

\begin{proof}
The proof  is a straightforward application of quantifier elimination.
Suppose we have a system of $k$ conditioned linear expressions representing $f$.
Each conditioned expression $\GLE{C\,}{\,e}$ is captured by the implication $(\bigwedge C) \rightarrow y = e$, so the whole system translates into a conjunction of $k$ such implications. To this conjunction, one need only add the range constraints $0 \leq 
z$ and $z \leq 1$ for each variable $z$, as further conjuncts. In this way, the graph is easily expressed as a quantifier free formula. (Since the implications are equivalent to disjunctions of atomic formulas, the resulting formula is naturally in conjunctive normal form.) 

Conversely, suppose $F(x_1, \dots, x_n, y)$ defines the graph of $f$. By  quantifier elimination, we can assume that $F$ is quantifier free and in disjunctive normal form. Then $F$ is a disjunction of conjunctions, where each conjunction, $K$, can be easily rewritten in the form
\begin{equation}
\label{equation:K}
\left(\bigwedge C  \right)
\, \wedge \, 
\left(\bigwedge_{1 \leq i \leq h} y > a_i \right)
\, \wedge \, 
\left(\bigwedge_{1 \leq i \leq k} y \geq b_i\right)
\, \wedge \, 
\left(\bigwedge_{1 \leq i \leq l} y \leq c_i\right)
\, \wedge \, 
\left(\bigwedge_{1 \leq i \leq m} y < d_i\right) \enspace ,
\end{equation}
such that the only variables in the finite set of atomic formulas $C$, and linear expressions $a_i, b_i, c_i, d_i$ are $x_1, \dots, x_n$. 
Since $F$ is the graph of a function, for all reals
$r_1, \dots, r_n$, there is at most one $s$ such that 
$K(\vv{r},s)$ holds, and, if it does, then all of $r_1, \dots, r_n, s$ are in $[0,1]$.
Given such an $s$, we therefore  have:
\[
\max\{a_i(\vv{r}) \mid 1 \leq i \leq h\} 
< 
\max\{b_i(\vv{r}) \mid 1 \leq i \leq k\} = s = \min\{c_i(\vv{r}) \mid 1 \leq i \leq l\} < \min\{d_i(\vv{r}) \mid 1 \leq i \leq m\} \enspace .
\]
A system of conditioned linear expressions for $f$ is thus obtained as follows.
For each conjunct $K$ in $F$, written in the form of (\ref{equation:K}) above,
and each $j$ with $1 \leq j \leq k$, include the conditioned linear expression:
\[
\GLE{
C, \,  
\{b_j > a_i \}_{1 \leq i \leq h}, \, 
\{b_j \geq b_i\}_{1 \leq i \leq k}, \, 
 \{b_j \leq c_i\}_{1 \leq i \leq l}, \,
 \{b_j <  d_i\}_{1 \leq i \leq m}, \,
}{\, b_j} \enspace .
\]
\end{proof}

\noindent

Combining Propositions~\ref{proposition:mu-term:arithmetic} and~\ref{proposition:cle:arithmetic} we obtain:

\begin{corollary}
\label{corollary:sledgehammer}
For every {\L}ukasiewicz $\mu$-term $t(x_1, \dots, x_n)$, the function 
\[
\vv{r} \mapsto t(\vv{r}) \colon [0,1]^n \to [0,1]
\]
is piecewise linear. Moreover, if $t$ is a rational $\mu$-term then the function is rational piecewise linear, and a representing system of conditioned linear expressions in variables $x_1, \dots, x_n$ can be computed from $t$. 
\end{corollary}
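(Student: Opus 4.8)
The plan is simply to chain the two preceding propositions; no new idea is required. Recall that, as observed earlier in this section, for every $\vv{r} \in [0,1]^n$ the value $t(\vv{r})$ is a well-defined element of $[0,1]$, and $\vv{r} \mapsto t(\vv{r})$ is a genuine (monotone) function $f \colon [0,1]^n \to [0,1]$. Hence the object whose piecewise linearity is to be established really is a function, and its graph coincides with the set $\{(\vv{x},y) \in [0,1]^{n+1} \mid t(\vv{x}) = y\}$ appearing in Proposition~\ref{proposition:mu-term:arithmetic}.

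First I would apply Proposition~\ref{proposition:mu-term:arithmetic} to $t$, obtaining a formula $F_t(x_1, \dots, x_n, y)$ of the first-order theory of linear arithmetic that defines this graph. When $t$ is rational, the same proposition moreover tells us that $F_t$ is a formula of rational linear arithmetic and that it is computable from $t$. Then I would feed $F_t$, as the defining formula for $f$, into the direction of Proposition~\ref{proposition:cle:arithmetic} asserting that definability of the graph implies (rational) piecewise linearity: this yields at once that $f$ is piecewise linear, and, in the rational case, that it is rational piecewise linear. Finally, the constructive content of the proof of Proposition~\ref{proposition:cle:arithmetic} --- quantifier elimination on $F_t$, conversion to disjunctive normal form, and the explicit read-off of conditioned linear expressions from each disjunct --- shows that a representing system $\mathcal{F}$ is computable from $F_t$; composing the computable maps $t \mapsto F_t$ and $F_t \mapsto \mathcal{F}$ produces a representing system in the variables $x_1, \dots, x_n$ computable directly from $t$.

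I do not anticipate any genuine obstacle: the corollary is a formal consequence of the two propositions it cites. The only matters needing care are bookkeeping ones --- checking that the graph referred to in Proposition~\ref{proposition:mu-term:arithmetic} is literally the graph of the function $f$ in the present statement, so that Proposition~\ref{proposition:cle:arithmetic} applies without modification, and checking that the two computability claims compose. Neither of these involves any computation.
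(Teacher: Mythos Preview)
Your proposal is correct and matches the paper's approach exactly: the paper simply states that the corollary is obtained by combining Propositions~\ref{proposition:mu-term:arithmetic} and~\ref{proposition:cle:arithmetic}, without giving any further argument. Your write-up spells out the chaining of these two results and the composition of the computability claims, which is precisely the intended proof.
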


McNaughton's Theorem  (\cite{McNaughton1951}, see also \cite{MundiciBook}) famously classifies the functions defined by ordinary {\L}ukasiewicz formulas (without scalar multiplication) as the continuous piecewise linear functions on $[0,1]$ with integer coefficients (McNaughton functions). The result is specialized in \cite[Thm. 3.5]{CK2013} where it is shown that monotone McNaughton functions correspond to monotone (i.e., negation free) ordinary {\L}ukasiewicz formulas. Another variant of McNaughton's theorem, proved in \cite[Thm. 4.5]{Gerla2001b}, classifies rational {\L}ukasiewicz formulas (i.e., with scalar multiplication by rationals) as the continuous piecewise linear functions on $[0,1]$ with rational coefficients.

A positive answer to the question below would provide an analogous result for the {\L}ukasiewicz $\mu$-calculus. 

\begin{question}
Is every monotone (rational) piecewise linear function $f \colon [0,1]^n \to [0,1]$ definable by a (rational) $\mu$-term $t(x_1, \dots, x_n)$?
\end{question}

\section{An iterative algorithm for evaluating $\mu$-terms}
\label{section:algorithm}

The computation of a representing system of conditioned linear expressions
for a $\mu$-term $t$ via quantifier elimination,
provided by the proofs of
Propositions~\ref{proposition:mu-term:arithmetic} and~\ref{proposition:cle:arithmetic}, is indirect.
In this section we  present an alternative algorithm for calculating
the value $t(\vv{r})$ of a rational $\mu$-term at rationals $r_1, \dots, r_n  \in [0,1]$, which is
directly based on manipulating conditioned linear expressions. 
Rather than computing an entire system of conditioned linear expressions representing $t$, the algorithm works locally to provide a single conditioned expression that applies to the input vector $\vv{r}$. Fixed points are computed by iterating through approximations to them, starting with $0$ for least fixed points and $1$ for greatest.

\subsection{Motivating the algorithm}
\label{section:motivating}

The algorithm works inductively on the structure of terms. The crucial aspect is how to compute least and greatest fixed points. Accordingly, given a method for obtaining linear pieces for an inner term $t'(x_1, \dots, x_{n+1})$, we need to specify how to find linear pieces for the fixed-point terms    $\Mu{x_{n+1}}{t'}$ and $\Nu{x_{n+1}}{t'}$. 

We illustrate the main idea behind the iteration by considering an example of finding the least-fixed point of the piecewise linear function $f$ specified by:
\begin{align*}
\GLE{x < \frac{1}{2} ~ &}{~\frac{1}{2}\, x + \frac{1}{4}}
\\
\GLE{\frac{1}{2} \leq x \leq \frac{9}{16} ~ &}{ ~ x + \frac{1}{8}}
\\
\GLE{\frac{9}{16} < x  \leq \frac{5}{8} ~ &}{~ \frac{3}{4}}
\\
\GLE{\frac{5}{8} \leq  x ~ &}{~ \frac{1}{4}\, x + \frac{19}{32}}
\end{align*}
This function, which is illustrated in Figure~\ref{figure:example},  is discontinuous and has a unique fixed-point at $\frac{19}{24}$.

\begin{figure}
\begin{center}
 \begin{tikzpicture}
          \draw[ultra thin,color=gray] (0.0,0.0) grid (4.0,4.0);

          \draw[->] (0,0) node[below] {0} -- (4,0) node[below] {1};
          \draw[->] (0,0) node[left] {0} -- (0,4) node[left] {1};

          \draw [domain=0:1.98,thick] plot (\x,1+0.5*\x);
          \draw [thick] (2,2) circle (1pt);
          \filldraw  (2,2.5) circle (0.8pt);
          \draw [domain=2:2.25,thick] plot (\x,0.5+\x);
          \filldraw  (2.25,2.75) circle (0.8pt);
          \draw [domain=2.27:2.5,thick] plot (\x,3);
          \draw [thick] (2.25,3) circle (1pt);
          \filldraw (2.5,3) circle (0.8pt);
          \draw [domain=2.5:4,thick] plot (\x,2.375+0.25*\x);

    \end{tikzpicture}
\end{center}
\caption{Example discontinuous piecewise linear function}
\label{figure:example}
\end{figure}
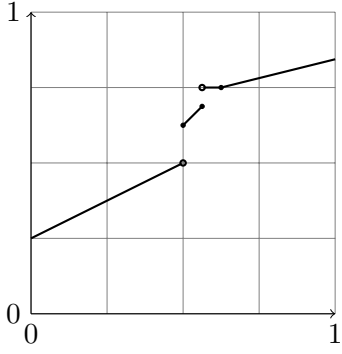

Our algorithm calculates the least fixed-point by iteratively refining
a lower-approximation $d$, starting from $d\!=\!0$. 
We note, however, that a blind iteration of $f$ from $0$ does not work.
The successive values $0\leq f(0)\leq f^2(0)\leq f^3(0) \leq \dots$ (i.e.,
$0 \leq \frac{1}{4} \leq  \frac{3}{8} \leq \frac{7}{16} \leq \dots$) never reach a fixed point. Furthermore, their limit $\frac{1}{2}$ is not a fixed point of $f$, due to the discontinuity of $f$ at $\frac{1}{2}$.

Instead the algorithm works by iterating through the linear pieces. The initial approximation is $d = 0$, and the linear piece for $x = 0$ is retrieved, given by $\frac{1}{2}\, x + \frac{1}{4}$
on the domain $[0,\frac{1}{2})$. The unique solution of 
$x = \frac{1}{2}\, x + \frac{1}{4}$ is $x = \frac{1}{2}$, but this lies outside the domain. So $d$ is replaced by a new approximation, given by
$\frac{1}{2}\, x + \frac{1}{4}$ calculated at the upper bound $x = \frac{1}{2}$ of the domain. That is, the next approximation is $d = \frac{1}{2}$.

We now again retrieve the linear piece at $x = \frac{1}{2}$, which is 
$x + \frac{1}{8}$ on the domain $[\frac{1}{2}, \frac{9}{16}]$. The equation
$x = x + \frac{1}{8}$ has no solution, so $d$ is replaced by the new approximation $x + \frac{1}{8}$ calculated at the upper bound $\frac{9}{16}$, i.e., $d = \frac{11}{16}$.

The linear piece at $x = \frac{11}{16}$ is $\frac{1}{4} \, x + \frac{19}{32}$ on the domain $[\frac{5}{8},1]$ (note that the algorithm has skipped one of the linear pieces of $f$).
The unique solution of $x = \frac{1}{4} \, x + \frac{19}{32}$ is $x = \frac{19}{24}$. Because this lies in the domain, 
$\frac{19}{24}$ is the desired least fixed point of $f$.

The full algorithm, detailed in Section~\ref{subsection:algorithm} below, adapts the approach outlined above to the general computation of $\Mu{x_{n+1}}{t'}$ for $t'(x_1, \dots, x_n, x_{n+1})$. In the case that $n \geq 1$, significant further complications arise due to the need to  calculate the fixed point as a linear expression rather than just a number. 

\subsection{The algorithm}
\label{subsection:algorithm}

The algorithm takes, as input, a rational $\mu$-term $t(x_1, \dots, x_n)$ and a vector of rationals $(r_1, \dots, r_n) \in [0,1]^n$, and returns a conditioned linear expression $\GLE{C}{e}$, in variables $x_1, \dots, x_n$, with the following two properties.
\begin{itemize}
\item[(P1)] $C(\vv{r})$ is true.

\item[(P2)] For all $s_1, \dots, s_n \in \mathbb{R}$, if $C(\vv{s})$ is true then
$s_1, \dots, s_n \in [0,1]$ and $e(\vv{s}) =  t(\vv{s})$.
\end{itemize}
\noindent
It follows that $e(\vv{r}) =  t(\vv{r})$, so $e$ can indeed be used to compute the value $t(\vv{r})$.

For the purposes of the correctness proof in Section~\ref{section:correctness}, it is convenient to consider the running of the algorithm in the more general case that $r_1, \dots, r_n$ are arbitrary real numbers in $[0,1]$. This more general algorithm can be understood as an algorithm in the 
\emph{Real RAM} model of computation, also known as the \emph{BSS} model after the initials of its proposers~\cite{BSS1989}. When the input vector is rational, all real numbers encountered during execution of the algorithm are themselves rational, and so the general Real RAM algorithm specialises to  a 
\emph{bona fide} (Turing Machine) algorithm in this case. Moreover, even in the case of irrational inputs, all linear expressions constructed in the course of the algorithm are rational.

The algorithm works recursively on the structure of the term $t$. We present illustrative cases for terms
$t_1 \strongor t_2$ and $\Mu{x_{n+1}}{t'}$. The latter is the critical case. 
The algorithm for $\Nu{x_{n+1}}{t'}$ is an obvious dualization.

If $t$ is $t_1 \strongor t_2$ then recursively compute $\GLE{C_1\,}{\,e_1}$ and
 $\GLE{C_2\,}{\,e_2}$. If $e_1(\vv{r}) + e_2(\vv{r}) \leq 1$ then return
\[
\GLE{C_1, \, C_2, \, e_1 + e_2 \leq 1  \, }{\, e_1 + e_2} \enspace .
\]
Otherwise, return
\[
\GLE{C_1, \, C_2, \, e_1 + e_2 \geq 1  \, }{\, 1} \enspace .
\]

In the case that $t$ is $\Mu{x_{n+1}}{t'}$, enter the following 
loop starting with $D = \emptyset$ and $d =0$. 

\paragraph{Loop:}
At the entry of the loop we have a finite set $D$ of inequalities between linear expressions in $x_1, \dots, x_n$, and we have a linear expression $d(x_1, \dots, x_n)$.
The loop invariant that applies is:
\begin{itemize}
\item[(I1)] $D(\vv{r})$ is true; and
\item[(I2)]  for all $\vv{s} \in [0,1]^n$, if $D(\vv{s})$ then $d(\vv{s}) \leq (\Mu{x_{n+1}}{t'})(\vv{s})$.
\end{itemize}
 \noindent
We think of $D$ as constraints propagated from earlier iterations of the loop, and of $d$ as the current approximation to the least fixed point subject to the constraints.

Recursively compute $t'(x_1, \dots, x_{n+1})$ at $(\vv{r},d(\vv{r}))$ as $\GLE{C}{e}$, where $e$ has the form:
\begin{equation}
\label{equation:e-form}
q_1\, x_1 + \dots + q_n \, x_n + q_{n+1}\,  x_{n+1} + q \enspace .
\end{equation}

In the case that $q_{n+1} \neq 1$, 
define the linear expression:
\begin{equation}
\label{equation:f}
f \: := \: \frac{1}{1-q_{n+1}} \, \left( \, q_1 \, x_1 + \dots + q_n \, x_n  + q \, \right) \enspace .
\end{equation}
Test if $C(\vv{r},f(\vv{r}))$ is true. If it is, 
exit the loop and return:
\begin{equation}
\label{equation:first-result}
\GLE{D \, \cup \, C(x_1, \dots, x_n, d(x_1, \dots, x_n) )\, \cup \, C(x_1, \dots, x_n, f(x_1, \dots, x_n) )\,}{\, f} 
\end{equation}
as the result of the algorithm for $\Mu{x}{t'}$ at $\vv{r}$.
Otherwise, if 
$C(\vv{r},f(\vv{r}))$ is  false, define  $N(x_1, \dots, x_n)$ to be the negation of the inequality $e_1(x_1, \dots, x_n, f(x_1, \dots x_n)) \vartriangleleft e_2(x_1, \dots, x_n, f(x_1, \dots x_n))$ (using 
$\vartriangleleft$ to stand for either $<$ or $\leq$), where
$e_1(x_1, \dots, x_{n+1})  \vartriangleleft e_2(x_1, \dots, x_{n+1})$ is a chosen inequality in $C$
for which $e_1(\vv{r}, f(\vv{r})) \vartriangleleft e_2(\vv{r}, f(\vv{r}))$ is false, and
go to \textbf{find next approximation} below.

In the case that $q_{n+1} = 1$, test the 
equality
$q_1 \, r_1 + \dots + q_n \, r_n  + q  = 0$.
If true, exit the loop with result:
\begin{equation}
\label{equation:second-result}
\GLE{D  \, \cup \, C(x_1, \dots, x_n, d(x_1, \dots, x_n) ) \, \cup \, \{q_1 \, x_1 + \dots + q_n \, x_n  + q  =  0\} \,} {\, d} \enspace .
\end{equation}
If instead 
$q_1 \, r_1 + \dots + q_n \, r_n  + q  \neq 0$, choose $N(x_1, \dots, x_n)$ to be whichever of 
the inequalities
\[
q_1 \, x_1 + \dots + q_n \, x_n  + q  \; <  \; 0 \qquad
0 \; < \; q_1 \, x_1 + \dots + q_n \, x_n  + q  
\]
is true for $\vv{r}$, 
and 
proceed with
\textbf{find next approximation} below.

\paragraph{Find next approximation:}
Arrange the inequalities in $C$ so they have the following structure. 
\begin{equation}
\label{equation:C}
C' 
\, \cup \, 
\{ x_{n+1} > a_i \}_{1 \leq i \leq l'} 
\, \cup \, 
\{ x_{n+1} \geq a_i \}_{l' < i \leq l} 
\, \cup \, 
\{ x_{n+1} \leq b_i \}_{1 \leq i \leq m'} 
\, \cup \, 
\{ x_{n+1} < b_i \}_{m' < i \leq m} 
\end{equation}
such that the only variables in the inequalities $C'$, and linear expressions $a_i, b_i$ are $x_1, \dots, x_n$. 
Choose $j$ with $1 \leq j \leq m$ such that 
$b_j(\vv{r}) \leq b_i(\vv{r})$ for all $i$ with $1 \leq i \leq m$ (in the sequel we shall refer to $b_j$ as the \emph{infimum term}).
Then go back to \textbf{loop}, 
taking
\begin{equation}
\label{equation:next-iteration}
D \, \cup \, C(x_1, \dots, x_n, d(x_1, \dots, x_n))\,  \cup \, \{N(x_1, \dots, x_n)\}
\, \cup \, \{b_j \leq b_i \mid 1 \leq i \leq m\}
\qquad
e(\vv{x},b_j(\vv{x}))
\end{equation}
to replace $D$ and $d$ respectively.




\subsection{A simple example}
\label{section:example}
Consider the rational $\mu$-term $t=\mu x. (     \mathbb{P}_{\geq \frac{1}{2}}x \, \sqcup \, \frac{1}{2})$, where $\mathbb{P}_{\geq \frac{1}{2}}x$ is the macro formula as in Definition~\ref{thresholds_encoding}, that is   $\mathbb{P}_{\geq \frac{1}{2}}x= \mathbb{P}_{=1}(    x \oplus \frac{1}{2}) = \nu y. ( y \odot (x \oplus  \frac{1}{2}))$. Thus,

$$
t=\mu x. \Big(  \nu y. \big( y \odot (x \oplus  \frac{1}{2})\big)  \sqcup \frac{1}{2}\Big)
$$
Here, $t^\prime(x)=  \nu y. \big( y \odot (x \oplus  \frac{1}{2})\big)  \sqcup \frac{1}{2}$ is a discontinuous function, and the value of $t$ is $1$.

We omit giving a detailed simulation of the algorithm on the subexpression $t'(x)$ at $x=r$. The result it produces, however,  is
$\{Ê0\leq x < \frac{1}{2}\}\vdash \frac{1}{2}$ if ${r}<\frac{1}{2}$, and 
$\{Ê\frac{1}{2} \leq x \leq 1\}\vdash 1$ if $r\geq \frac{1}{2}$.

We run the algorithm on input $\mu x.t^\prime(x)$. Set $D=\emptyset$ and $d=0$. Calculating $t^\prime(x)$ at $x = 0$ we obtain $C\vdash e$ as $\{Ê0\leq x < \frac{1}{2}\}\vdash \frac{1}{2}$. The value of the coefficient of $x$ in $e$ is $0$. Thus, we need to calculate $f:=\frac{1}{1-0}(\frac{1}{2})=\frac{1}{2}$. The constraint $C(\frac{1}{2})$ does not hold. Thus we need to iterate the algorithm with $D=\emptyset$ and $d\!=\!\frac{1}{2}$. Calculating $t^\prime(x)$ at $x=\frac{1}{2}$ produces $C\vdash e$ as $\{Ê\frac{1}{2} \leq  x \leq 1\}\vdash 1$. Compute $f:=\frac{1}{1- 0}(1)=1$. Since $C(1)$ holds, the algorithm terminates with $\GLE{\emptyset}{1}$, as desired.

\subsection{A more complex example}

As a second example, we compute the value of the $\mu$-term $\nu x_0.\, \mu x_1.\, (\frac{5}{8} \oplus \frac{3}{8} x_0) \odot (\frac{1}{2} \sqcup (\frac{3}{8} \oplus \frac{1}{2} x_1))$ to describe in some detail the behavior of the algorithm on nested fixed-point expressions.

We illustrate the execution using subscripts such as $d_0, d_1$ 
as reminders of the current  depth in the nested recursions that the algorithm is at. So, e.g., $d_0$ is the current approximation within the loop evaluating the $\nu x_0.( \dots) $ expression, and $d_1$ is the approximation in the inner loop evaluating the $\mu x_1 .(\dots) $ subexpression. 

The algorithm starts computing the value of the outer greatest fixed-point expression starting from $D_{0}=\emptyset$ and approximating function $d_0 := 1$. The conditional linear expression $C_0\vdash e_0$ associated with the inner $\mu$-term $\mu x_1.\, (\frac{5}{8} \oplus \frac{3}{8} x_0) \odot (\frac{1}{2} \sqcup (\frac{3}{8} \oplus \frac{1}{2} x_1))$ at value $(x_0 = 1)$ needs to be calculated.


This is computed iteratively starting from the condition $D_1=\emptyset$ and $d_1\!=\!0$. 
Recursively computing $(\frac{5}{8} \oplus \frac{3}{8} x_0) \odot (\frac{1}{2} \sqcup (\frac{3}{8} \oplus \frac{1}{2} x_1))$ at values $(x_0 = 1, x_1 = 0)$ one obtains (details are omitted) the conditional linear expression  $C_1 \vdash e_1$ with $C_1\!=\!\{ 0\leq x_0 \leq 1, \, x_1 \leq \frac{1}{4}\}$ and $e_1\!=\! \frac{3}{8}x_0 + \frac{1}{8}$. Note that the coefficient $q_1$ of the variable $x_1$, see (\ref{equation:e-form}), is $0$. The function $f_1$, see (\ref{equation:f}), is then defined as $f_1(x_0) \!=\! \frac{3}{8}x_0 + \frac{1}{8}$. The condition $C_1(1,f_1(1))\!=\!C_1(1, \frac{1}{2})$ does not hold, since $\frac{1}{2}\not\leq \frac{1}{4}$. Thus, the inequality $N_1(x_0) := \, \frac{3}{8}x_0 + \frac{1}{8} > \frac{1}{4}$, which is the negation of $f_1(x_0)\leq \frac{1}{4}$, is calculated. Simplifying, we have  $N_1(x_0) \!:=\! \, x_0 > \frac{1}{3}$. The supremum term in $C_1$, dual to the 
infimum term in (\ref{equation:C}), 
is $b_j(x_0) := \, \frac{1}{4}$, using the notation of (\ref{equation:C}). So the loop is repeated, see (\ref{equation:next-iteration}), with $D_1 := \{ 0 \leq x_0 \leq 1, \, d_1 \leq \frac{1}{4} , \, x_0 > \frac{1}{3}\}\!=\!\{ \frac{1}{3}<x_0< \frac{1}{4}\}$ and $d_1 := \, \frac{3}{8} x_0 + \frac{1}{8}$.

 Computing $(\frac{5}{8} \oplus \frac{3}{8} x_0) \odot (\frac{1}{2} \sqcup (\frac{3}{8} \oplus \frac{1}{2} x_1))$ with values $(x_0 = 1, x_1 = d_1(1) = \frac{1}{2})$ one obtains the conditional linear expression  $C_1 \vdash e_1$ with $C_1\!=\!\{ 0\leq x_0 \leq 1, \, \frac{1}{4}\leq x_1 \leq 1\}$ and $e_1\!=\! \frac{3}{8}x_0 + \frac{1}{2}x_1$. As in the previous iteration, $q_1\!=\!\frac{1}{2}\!\neq\!0$, and the function $f_1(x_0)$ is then defined as $f_1(x_0)\!=\!\frac{3}{4}x_0$.
 The algorithm checks that $C_1(1,f_1(1))\!=\!C_1(1,\frac{3}{4})$ holds and terminates with result $C_0 \vdash e_0$ equals to $\frac{1}{3} < x_0 \leq 1, \, \frac{1}{4} \leq \frac{3}{8}x_0 + \frac{1}{8} \leq 1, \, \frac{1}{4} \leq \frac{3}{4}x_0 \leq 1 \, \vdash \, \frac{3}{4}x_0$.

Simplifying, $C_0 \vdash e_0$ equals $\frac{1}{3} < x_0 \leq 1 \, \vdash \, \frac{3}{4}x_0$.
Since $q_0$ (the coefficient of the $x_0$ variable) is $\frac{3}{4}$, the function $f_0:=0$ is defined. The condition $C_0(f_0())\!=\!C_0(0)$ is not satisfied. Calculating the negated condition $N_0()$ is redundant, since the index is $0$. The infimum term in $C_0$ is $b_j()\!:=\!\frac{1}{3}$,
again using the notation of (\ref{equation:C}). Therefore the loop is repeated with $D_0\!=\!\emptyset$ and $d_0:=\frac{3}{4}\cdot \frac{1}{3}$.
 
The conditional linear expression associated with  the $\mu$-term $\mu x_1.\, (\frac{5}{8} \oplus \frac{3}{8} x_0) \odot (\frac{1}{2} \sqcup (\frac{3}{8} \oplus \frac{1}{2} x_1))$ at value $(x_0 = \frac{1}{4})$ needs to be calculated. This time we omit the details of the computation which produces the result: $C_0\vdash e_0$ with 
$C_0\!=\!\{ 0\leq x_0\leq \frac{1}{3}\}$ and $e_0(x_0)\!=\! \frac{3}{8}x_0+\frac{1}{8}$.
 
 The coefficient $q_0 \!=\! \frac{3}{8}$ is not equal to $1$, thus the function  $f_0() \!:=\! \frac{1}{5}$ is defined. The condition $C_0(f_0()) \!=\! C_0(\frac{1}{5})$ holds. Thus the algorithm returns the conditional linear expression $C_0(f_0)\cup C_0(d_0)\vdash \frac{1}{5}$, i.e., $\emptyset\vdash\frac{1}{5}$ as result. That is, the result is $\frac{1}{5}$.

\subsection{Correctness of the algorithm}
\label{section:correctness}

\begin{theorem}
\label{theorem:algorithm-correct}
Let $t(x_1, \dots, x_n)$ be any {\L}ukasiewicz $\mu$-term.
Then, for every input vector $(r_1, \dots, r_n) \in [0,1]^n$, 
the above (Real RAM) algorithm terminates with a conditioned linear expression
$\GLE{C_{\vv{r}} \,}{\, e_{\vv{r}}}$ satisfying properties (P1) and (P2). 
Moreover, the set of all 
possible resulting conditioned linear expressions
\begin{equation}
\label{equation:result-set}
\{ \GLE{C_{\vv{r}} \,}{\, e_{\vv{r}}} \mid \vv{r} \in [0,1]^n \}
\end{equation}
is finite, and thus provides a representing system for the function $t \colon [0,1]^n \to [0,1]$.
\end{theorem}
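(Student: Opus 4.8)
The plan is to prove termination and correctness by induction on the structure of $t$, where for each term the inductive hypothesis is exactly the conjunction of the three assertions in the theorem: the algorithm terminates, the returned conditioned linear expression satisfies (P1) and (P2), and the set of all possible outputs (ranging over $\vv{r} \in [0,1]^n$) is finite. The non-fixed-point cases (variables, constants, scalar multiplication, $\weakand$, $\weakor$, $\strongand$, $\strongor$) are routine: termination is immediate, (P1) and (P2) follow by unwinding the definition of the operation together with the inductive (P1), (P2), and finiteness of the output set follows because each output is built from finitely many recursive outputs by a bounded case analysis (for instance, in the $\strongor$ case the output is determined by the pair of recursive outputs and a single bit recording whether $e_1 + e_2 \leq 1$ on $\vv{r}$, so there are only finitely many). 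The substance is entirely in the $\mu x_{n+1}.\,t'$ case (the $\nu$ case being dual), and this is where I would concentrate.

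For the fixed-point case I would first verify that the loop invariant (I1), (I2) is established initially ($D = \emptyset$, $d = 0$, using $0 \leq \lfp$) and is preserved by each iteration. Preservation of (I1) is a bookkeeping check that all inequalities added to $D$ in \eqref{equation:next-iteration} — namely $C(\vv{x}, d(\vv{x}))$, the chosen negated inequality $N$, and the infimum constraints $b_j \leq b_i$ — hold at $\vv{r}$; this uses (I1) for the old $D$, property (P2) of the recursive call (so $C(\vv{r}, d(\vv{r}))$ holds because $(\vv{r}, d(\vv{r}))$ is in the domain where the recursive output equals $t'$), and the defining choice of $N$ and $j$. Preservation of (I2) is the crux: one must show the new approximation $d_{\mathrm{new}}(\vv{s}) = e(\vv{s}, b_j(\vv{s}))$ still lies below $\lfp$ on the new domain. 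The key facts are that on that domain $d_{\mathrm{old}}(\vv{s}) \leq \lfp(\vv{s})$ (old (I2)), that $t'(\vv{s}, \cdot)$ is monotone, that the value $b_j(\vv{s})$ is at least $d_{\mathrm{old}}(\vv{s})$ applied through one step of $t'$, and that the infimum constraint ensures $b_j(\vv{s})$ is exactly the least upper endpoint among the active $x_{n+1}$-upper-bounds of $C$ — so $e(\vv{s}, b_j(\vv{s}))$ is the value of $t'$ pushed up to the boundary of the current linear piece, which by convexity of the piece's domain and monotonicity remains a lower bound for $\lfp$. I would then treat loop exit: when $C(\vv{r}, f(\vv{r}))$ holds (with $q_{n+1} \neq 1$), $f$ is a genuine fixed point of the relevant linear piece of $t'$ lying inside that piece's domain, and combined with (I2) and the minimality of $\lfp$ one shows $f(\vv{s}) = \lfp(\vv{s})$ on the output domain, giving (P2); (P1) is immediate from (I1) plus the fact that $C(\vv{r}, d(\vv{r}))$ and $C(\vv{r}, f(\vv{r}))$ hold. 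The degenerate exit case $q_{n+1} = 1$ with $q_1 r_1 + \dots + q = 0$ is handled similarly, with $d$ itself being the fixed point.

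The hard part will be termination of the loop, and this is where I expect the main obstacle. The argument I would use is that each iteration either strictly enlarges the set of "active" strict/non-strict upper-bound terms that have been used as infimum terms, or more robustly: the recursive call to $t'$ at $(\vv{r}, d(\vv{r}))$ returns, by the inductive finiteness hypothesis, one of only finitely many conditioned linear expressions for $t'$; the sequence of approximations $d$ is strictly increasing in value at $\vv{r}$ (since we only re-enter the loop when $f(\vv{r})$ or the fixed point lies strictly above the current piece, forcing $b_j(\vv{r}) > d(\vv{r})$), and each value $d(\vv{r})$ is the value at $\vv{r}$ of a linear piece of (the finitely-represented function) $t'$; since there are only finitely many such pieces and the $d(\vv{r})$-values strictly increase while staying in $[0,1]$, the loop must halt. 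Making "strictly increasing" precise — ruling out that the algorithm revisits the same piece with the same value — requires care: one shows that when the loop is re-entered, the new $d$ evaluated at $\vv{r}$ equals $b_j(\vv{r})$, which is a value strictly greater than the previous $d(\vv{r})$ because the previous $f(\vv{r})$ (or the no-solution condition) placed the sought fixed point strictly outside the current domain on the relevant side, and monotonicity of $t'$ then forces the next value up to the domain boundary to exceed the old one. Finally, finiteness of the whole output set \eqref{equation:result-set} for $\mu x_{n+1}.\,t'$ follows because every output is assembled from the finitely many possible recursive outputs of $t'$ by a process with only finitely many branch points (the loop runs for a bounded number of steps, bounded by the number of linear pieces of $t'$, and each step chooses among finitely many options), so only finitely many $\GLE{C}{e}$ can ever be produced; the last clause of the theorem, that this finite set is a representing system, is then immediate from (P1), (P2) and Proposition stating the two conditions defining "represents".
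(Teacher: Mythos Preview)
Your plan follows the paper's proof: structural induction, focus on the fixed-point case, verify the loop invariant, analyse the two exit modes, then argue termination and finiteness. The exit-correctness and invariant-preservation sketches are broadly right (though for (I2) you will need the negated constraint $N(\vv{s})$ to show that $C(\vv{s},\lfp(\vv{s}))$ fails for \emph{every} $\vv{s}$ in the new domain, not just at $\vv{r}$; this is what makes the supremum argument $e(\vv{s},b_j(\vv{s})) \leq \lfp(\vv{s})$ go through).

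The genuine gap is in termination. You argue: the $d(\vv{r})$ values strictly increase, each $d(\vv{r})$ is ``the value at $\vv{r}$ of a linear piece of $t'$'', there are finitely many pieces, hence the loop halts. This does not follow---a strictly increasing bounded sequence can be infinite, and nothing in what you wrote prevents the algorithm from revisiting the same piece at a higher $d(\vv{r})$. What is needed is that each piece $C_{k_i}$ is visited at most once. The paper obtains this from two facts you do not state: first, not only is $d_{i+1}(\vv{r}) > d_i(\vv{r})$, but the new approximation lies \emph{outside} the piece just used, i.e.\ $C_{k_i}(\vv{r},\, d_{i+1}(\vv{r}))$ is false (because $e(\vv{r},b_j(\vv{r})) \leq b_j(\vv{r})$ would force $b_j(\vv{r})$ to be the fixed point, contradicting the failure of the exit test); second, the slice $\{\,s : C_{k_i}(\vv{r},s)\,\}$ is convex, so once left from above it is never re-entered. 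Together these bound the loop by $k'$, the (finite by induction) number of pieces of $t'$. Two smaller corrections: the new approximation at $\vv{r}$ is $e(\vv{r},b_j(\vv{r}))$, not $b_j(\vv{r})$; and at the $q_{n+1}\neq 1$ exit, ``minimality of $\lfp$'' only gives $\lfp(\vv{s}) \leq f(\vv{s})$---the equality needs the uniqueness of the linear fixed point on the convex interval $[d(\vv{s}),f(\vv{s})]$.
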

\noindent
Before the proof it is convenient to introduce some terminology associated with the properties stated in the theorem.
For a $\mu$-term $t$, we call the cardinality of the set
(\ref{equation:result-set}) of possible results, $\GLE{C_{\vv{r}} \,}{\, e_{\vv{r}}}$,
the  \emph{basis size},
and we call the maximum number of inequalities in any $C_{\vv{r}}$ the 
\emph{condition size}.

\begin{proof} 
The proof is by induction on the structure of $t$. We verify the critical case in which $t$ is $\Mu{x_{n+1}}{t'}$.

We show first that the loop invariants (I1), (I2) guarantee that any result  returned via (\ref{equation:first-result}) or (\ref{equation:second-result}) satisfies (P1) and (P2).
By induction hypothesis, the recursive computation of 
$t'(x_1, \dots, x_{n+1})$ at $(\vv{r},d(\vv{r}))$ as $\GLE{C}{e}$, where 
$e$ has the form 
$q_1\, x_1 + \dots + q_n \, x_n + q_{n+1}\,  x_{n+1} + q$  as in (\ref{equation:e-form}), satisfies:
$C(\vv{r}, d(\vv{r}))$; and, for all
$s_1, \dots, s_{n+1} \in \mathbb{R}$, if 
$C(s_1, \dots, s_{n+1})$ then $\vv{s} \in [0,1]^n$ and 
$t'(s_1, \dots, s_{n+1}) = e(s_1, \dots, s_{n+1})$. 

In the case that $q_{n+1} \neq 1$, the linear expression $f$, defined in
(\ref{equation:f}), maps any $s_1, \dots, s_n \in \mathbb{R}$ to the unique solution
$f(\vv{s})$ to the equation $x_{n+1} = e(s_1, \dots, s_n, x_{n+1})$ in $\mathbb{R}$.
Suppose that $D(\vv{s})$ holds.
Then, by  loop invariant (I2), $d(\vv{s}) \leq (\Mu{x_{n+1}}{t'})(\vv{s})$.
Suppose also that $C(\vv{s}, f(\vv{s}))$. Then 
$t'(\vv{s},f(\vv{s})) = e(\vv{s},f(\vv{s})) = f(\vv{s})$, i.e., 
$f(\vv{s})$ is a fixed point of $x_{n+1} \mapsto t'(\vv{s}, x_{n+1})$; whence, 
$(\Mu{x_{n+1}}{t'})(\vv{s}) \leq f(\vv{s})$.
Suppose, finally, that $C(\vv{s}, d(\vv{s}))$ also holds. Then, because
both $C(\vv{s}, d(\vv{s}))$ and $C(\vv{s}, f(\vv{s}))$, and 
$d(\vv{s}) \leq (\Mu{x_{n+1}}{t'})(\vv{s}) \leq f(\vv{s})$, we have, by the convexity of constraints, that 
$t'(\vv{s}, s_{n+1}) = e(\vv{s}, s_{n+1})$ for all $s_{n+1} \in [d(\vv{s}),f(\vv{s})]$. So $f(\vv{s})$ is the unique fixed-point of
$x_{n+1} \mapsto t'(\vv{s}, x_{n+1})$ on $[d(\vv{s}),f(\vv{s})]$. Since, $d(\vv{s}) \leq (\Mu{x_{n+1}}{t'})(\vv{s})$, 
we have $f(\vv{s}) = (\Mu{x_{n+1}}{t'})(\vv{s})$. This argument justifies that the conditioned linear expression of 
(\ref{equation:first-result}) satisfies (P2). It satisfies
(P1) just if $C(\vv{r},f(\vv{r}))$, which is exactly the condition under which (\ref{equation:first-result}) is returned as the result.

In the case that $q_{n+1} = 1$ then, for any  $s_1, \dots, s_n \in \mathbb{R}$, 
the equation $x_{n+1} = e(s_1, \dots, s_n, x_{n+1})$  has a solution if and only if
$q_1 \, s_1 + \dots + q_n \, s_n  + q  = 0$, 
in which case any $x_{n+1} \in \mathbb{R}$ is a solution. 
Suppose that $q_1 \, s_1 + \dots + q_n \, s_n  + q  = 0$ and 
$C(\vv{s}, d(\vv{s}))$ both hold. Then $t'(s_1, \dots, s_n,d(\vv{s})) =
e(\vv{s}, d(\vv{s})) = d(\vv{s})$, so $d(\vv{x})$ is a fixed point of 
$x_{n+1} \mapsto t'(\vv{s}, x_{n+1})$. If also $D(\vv{s})$ holds then, by
 loop invariant (I2), $d(\vv{x}) = (\Mu{x_{n+1}}{t'})(\vv{s})$.
We have justified that the conditioned linear expression of 
(\ref{equation:second-result}) satisfies (P2). It satisfies
(P1) just if $q_1 \, r_1 + \dots + q_n \, r_n  + q  = 0$,
which is exactly the condition under which (\ref{equation:second-result}) is returned as the result.

Next we show that the loop invariants are preserved through the computation. 
Properties (I1) and (I2) are trivially satisfied by the initial values 
$D = \emptyset$ and $d =0$. We must show that they are preserved when $D$ and $d$ are modified via (\ref{equation:next-iteration}), which happens when 
execution passes to \textbf{find next approximation}.
In this subroutine, the inequalities  in $C$ are first arranged as in (\ref{equation:C}) where,
as $C(\vv{r},d(\vv{r}))$, we must have  $m \geq 1$, as otherwise $C(\vv{r},s)$ would hold for all real $s \geq  d(\vv{r})$, contradicting that $C(\vv{r},s)$ implies $s \in [0,1]$. (Similarly, $l \geq 1$.)
%
Thus there indeed exists $j$ with $1 \leq j \leq m$ such that 
$b_j(\vv{r}) \leq b_i(\vv{r})$ for all $i$ with $1 \leq i \leq m$. 
It is immediate that the constraints in the modified
$D$  of (\ref{equation:next-iteration}) 
are true for $\vv{r}$. Thus (I1) is preserved.
To show (I2), 
suppose $s_1, \dots, s_n$ satisfy the constraints, i.e.,
\begin{equation*}
D(\vv{s})  \qquad  C(\vv{s}, d(\vv{s})) \qquad
N(\vv{s}) \qquad 
\{b_j(\vv{s}) \leq b_i(\vv{s}) \mid 1 \leq i \leq m\} \enspace .
\end{equation*}
Defining $r' = (\Mu{x_{n+1}}{t'})(\vv{s})$, by (I2) for $D,d$ we have 
$d(\vv{s}) \leq r'$. We must show that $e(\vv{s},b_j(\vv{s})) \leq r'$.
By the definition of  $N(x_1, \dots, x_{n})$, in either the 
$q_{n+1} \neq  1$ or  $q_{n+1} = 1$ case, $N(\vv{s})$ implies that 
$C(\vv{s},\, r')$ does not hold.
Because $C(\vv{s},d(\vv{s}))$ and by the choice of $j$, it holds that 
$C(\vv{s},s)$, for all $s \in [0,1]$ such that  $s = d(\vv{s})$  or $d(\vv{s}) < s < b_j(\vv{s})$.
Since $C(\vv{s},r')$ is false and $d(\vv{s}) \leq r'$, it follows from the convexity of the conditioning set $C$ that, for every $s$
with  $s = d(\vv{s})$  or $d(\vv{s}) < s < b_j(\vv{s})$, we have $s < r'$.  
Whence, since $r'$ is the least prefixed point for $x_{n+1} \mapsto {t'}(\vv{s},x_{n+1})$, 
also $s < {t'}(\vv{s},s) \leq r'$, i.e., 
\begin{equation}
\label{equation:little}
s < e(\vv{s},s) \leq r' \enspace .
\end{equation}
Thus, $e(\vv{s},b_j(\vv{s})) = \sup \{ e(\vv{s},s) \mid 
\text{$s = d(\vv{s})$ or $d(\vv{s}) \leq s < b_j(\vv{s})$} \} \leq r'$.
Thus, $e(\vv{s},b_j(\vv{s})) \leq r'$, i.e., it is an approximation to the fixed point.
Moreover, it is a good new approximation to choose in the sense that:
\begin{equation}
\label{equation:better}
d(\vv{s}) < e(\vv{s},b_j(\vv{s}))~\text{~and ~ not}~
C(\vv{s}, e(\vv{s},b_j(\vv{s}))) \enspace .
\end{equation}
The former holds because $d(\vv{s}) < e(\vv{s},d(\vv{s}))$, by
(\ref{equation:little}),  and $d(\vv{s}) \leq b_j(\vv{s})$. The latter because if
$C(\vv{s}, e(\vv{s},b_j(\vv{s})))$ then, in particular, 
$e(\vv{s},b_j(\vv{s})) \leq b_j(\vv{s})$, so
$b_j(\vv{s}) = e(\vv{s},b_j(\vv{s})) = r'$, contradicting that 
not $C(\vv{s},r')$.

To show termination, by induction hypothesis, collecting all possible results of running the 
algorithm on $t'$ produces a representing system for $t' \colon [0,1]^{n+1} \to [0,1]$:
\begin{equation}
\label{equation:list}
\GLE{C_1\,}{\,e_1} \quad \dots \quad \GLE{C_{k'}\,}{\,e_{k'}} \enspace ,
\end{equation}
where $k'$ is the basis size of $t'$.
We now analyse the execution of the algorithm for $\Mu{x_{n+1}}{t'}$  on a given input vector $(r_1, \dots, r_n)$. On iteration number $i$, the loop is 
entered with constraints $D_i$ and approximation $d_i$ (where 
$D_1 = \emptyset$ and $d_1 = 0$), after which the recursive call to the 
algorithm for $t'$ yields one
of the conditioned linear expressions, $\GLE{C_{k_i}\,}{\,e_{k_i}}$, from
(\ref{equation:list}) above,
such that
$C_{k_i}(\vv{r}, d_i(\vv{r}))$ holds. Then, depending on conditions involving 
only $\GLE{C_{k_i}\,}{\,e_{k_i}}$ and $\vv{r}$, either a 
result is returned, or $D_{i+1}$ and $d_{i+1}$ are constructed for the 
loop to be repeated.
By (\ref{equation:better}), at iteration $i+1$ of the loop, we have
$d_{i+1} (\vv{r}) > d_i(\vv{r})$ and also $C_{k_i}(\vv{r}, d_{i+1}(\vv{r}))$ is false. 
Since each conditioning set
is convex, it follows that no $C_j$ can occur twice in the list $C_{k_1}, C_{k_2}, \dots$. Hence the algorithm must exit the loop after at most $k'$ iterations. Therefore, the computation for $\Mu{x}{t'}$ at $\vv{r}$ terminates. 

It remains to show that the algorithm for $\Mu{x}{t'}$  produces only finitely many conditioned linear expressions $\GLE{C_{\vv{r}}\,}{\,e_{\vv{r}}}$. 



We analyse the control flow in the algorithm for 
$\Mu{x_{n+1}}{t'}$  on a given input vector $(r_1, \dots, r_n)$. On iteration number $i$, the loop is 
entered with constraints $D_i$ and approximation $d_i$, after which 
the recursive call to the 
algorithm for $t'$ yields one
of the conditioned linear expressions, $\GLE{C_{k_i}\,}{\,e_{k_i}}$.
Suppose that $C_{k_i}$ and $D_i$ contain $u$ and $v$ inequalities respectively. If the loop is exited producing (\ref{equation:first-result}) as result then the resulting $C_{\vv{r}}$ has $2u +v$ inequalities. 
If it is exited producing (\ref{equation:second-result}) as result then 
$C_{\vv{r}}$ has $u +v+2$ inequalities (where $u + v + 2 \leq 2u+v$ because $C_{k_i}$ has to enforce the range constraint $0 \leq x_{n+1} \leq 1$). Otherwise, the algorithm repeats the loop, entering iteration $i+1$ with $D_{i+1}$, given by~(\ref{equation:next-iteration}),
having at most $2u + v$ inequalities ($N$ contributes $1$ inequality, and
there are at most $u-1$ inequalities  $b_j \leq b_i$ in (\ref{equation:next-iteration}) since $l \geq 1$). 

Therefore, if $l'$ is now the 
maximum number of inequalities occurring in any
$C_j$ from 
(\ref{equation:list}) (i.e., if it is the condition size for $t'$)
the algorithm for $\Mu{x_{n+1}}{t'}$ at $\vv{r}$, which runs for at most $k'$ iterations, results in $C_{\vv{r}}$ containing at most $2k'l'$ inequalities. 

To bound the number of  results $\GLE{C_{\vv{r}}}{e_{\vv{r}}}$, we count the possible control flows of the algorithm. At iteration $i$, the algorithm uses $\GLE{C_{k_i}\,}{\,e_{k_i}}$ from
(\ref{equation:list}), using which it might terminate with either 
(\ref{equation:first-result}) or (\ref{equation:second-result}), or it might
repeat the loop, 
entering iteration $i+1$ with $D_{i+1}$, given by~(\ref{equation:next-iteration}),
which can arise from $C_{k_i\,}$ in a number of ways determined by the possible
pairs of choices for $N$ and $b_j$ in (\ref{equation:next-iteration}).
In the case that the variable vector $(x_1, \dots, x_n)$ is empty (i.e., the term
$\Mu{x_{n+1}}{t'}$ is closed) the constraints in $D$ are redundant (they are simply true inequalities between rationals) and so can be discarded. 
In the case that $n \geq 1$, there are at least $2$ inequalities in $C$ giving range constraints on $x_1$, so there are at most $l'$ choices for $N$
($l'-2$ choices in the case that $q_{n+1} \neq 1$, and $2$ in the case $q_{n+1}= 1$).
Irrespective of $n$, there are at most $l'-1$ choices for $b_j$ (taking $n$ into account this can be improved to $l'-2n-1$).
Therefore, the execution
of the algorithm, is determined by the sequence:
\[
k_1, \,  u_1, \, k_2, \, u_2, \, \dots, \, k_m , \, v
\]
where: $m \leq k'$ is the number of loop iterations performed; each $u_i$, where  $1 \leq u_i \leq l'(l'-1)$, represents the choice of $N$ and $b_j$ used in the construction of $D_{i+1}$ (\ref{equation:next-iteration}), and $v$ is $ 1$ or $2$ according to whether 
the resulting $\GLE{C_{\vv{r}}\,}{\,e_{\vv{r}}}$ is  returned 
via (\ref{equation:first-result}) or (\ref{equation:second-result}). Since each
number $k_i$ is distinct, the number of different such sequences is bounded by:
\begin{equation}
\label{equation:bound}
2 \sum_{m = 1}^{k'} \frac{k'!}{(k'-m)!} (l'\,(l'-1))^{m-1} \; \leq \; (k'(l')^2)^{k'} \enspace ,
\end{equation}
where the right-hand-side gives a somewhat loose upper bound. Therefore, the number of possible results $\GLE{C_{\vv{r}}\,}{\,e_{\vv{r}}}$ for the algorithm for $\Mu{x_{n+1}}{t'}$ is at most $(k'(l')^2)^{k'}$.

\end{proof}

The above proof gives a truly abysmal complexity bound for the algorithm.
Let the basis and condition size for the term $t'(x_1, \dots , x_{n+1})$ be
$k'$ and $l'$ respectively. 
Then, as in the proof, the basis and condition size for
$\Mu{x_{n+1}}{t'}$ are respectively bounded by:
\[
k \: \leq\: (k'(l')^2)^{k'} ~~ \text{and}~~  l \: \leq \: 2k'l' \enspace .
\]
Using these bounds, the basis and condition size have non-elementary growth in the number of fixed points in a term $t$.

\subsection{Comparison}\label{section:comparison}

According to the crude complexity analyses we have given, the evaluation of 
{\L}ukasiewicz $\mu$-terms via rational linear arithmetic and (naive) black-box quantifier elimination is (in having doubly-\ and triply-exponential space and time complexity bounds) 
preferable to the (non-elementary space and hence time) evaluation via the direct algorithm. 
Nevertheless, we expect the direct algorithm to work better than this in practice. 
Indeed, a  main  motivating factor in the design of the direct algorithm 
is that the algorithm for $\Mu{x_{n+1}}{t'}$ 
only explores as much of the basis set for $t'$ as it needs to, and does so in an order that is tightly constrained by the monotone improvements made to the approximating $d$ expressions along the way. In contrast, the crude
complexity analysis is based on a worst-case scenario in which the
algorithm  is assumed to visit the entire basis for 
$t'$, and, moreover, to do so, for different input vectors $\vv{r}$, in every
possible order for  visiting the different basis sets. 
Perhaps better bounds can be obtained by a combination of relatively straightforward optimisations  of the algorithm and more careful analysis.

\section{ {\L}ukasiewicz  Modal $\mu$-Calculus}

In this section we introduce a modal extension of the {\L}ukasiewicz $\mu$-Calculus whose formulas are interpreted over  \emph{probabilistic nondeterministic transition systems} also known as \emph{Markov decision processes} (see, e.g., Section 10.6 of \cite{BaierKatoenBook}).

\begin{definition}
Given a set $S$ we denote with $\mathcal{D}(S)$ the set of \emph{(discrete) probability distributions} on $S$ defined as $\mathcal{D}(S)\!=\!\{Êd:S\rightarrow[0,1] \ | \ \sum_{s\in S} d(s) = 1\}$. We say that $d\!\in\!\mathcal{D}(S)$ is \emph{rational} if $d(s)$ is a rational number, for all $s\!\in\! S$.
\end{definition}
\begin{definition}
A \emph{probabilistic nondeterministic transition system} (PNTS) is a pair $(S, \rightarrow)$ where $S$ is a set of states and $\rightarrow\ \subseteq S\times \mathcal{D}(S)$ is the \emph{accessibility} relation. We write $s\not \rightarrow$ if $\{ d \ | \ s\rightarrow d\}\!=\!\emptyset$. A PNTS $(S,\rightarrow)$ is \emph{finite rational} if $S$ is finite and  $\bigcup_{s\in S}\{ d \ | \ s\rightarrow d\}$ is a finite set of rational probability distributions.
\end{definition}


We now introduce the  {\L}ukasiewicz  Modal $\mu$-Calculus which extends the probabilistic (or quantitative) modal $\mu$-calculus (qL$\mu$) of \cite{HM96,MM97,MM07,AM04}.

\begin{definition}
The syntax of formulas of {\L}ukasiewicz  Modal $\mu$-Calculus is generated by the following grammar:
\[
\phi \, ::= \,   X \mid P \mid \negate{P} \mid 
q \, \phi \mid  
 \phi \weakor \phi \mid \phi \weakand \phi \mid \phi \strongor \phi \mid \phi \strongand \phi  \mid \ \Diamond\phi \mid \ \Box \phi \mid  \Mu{X}{\phi} \mid \Nu{X}{\phi} \enspace ,\]
where $q$ ranges over rationals in $[0,1]$, $X$ over a countable set $\texttt{Var}$ of variables and $P$ over a set $\texttt{Prop}$ of propositional letters which come paired with associated complements $\negate{P}$. As a convention we denote with $\underline{1}$ the formula $\nu X.X$ and with $\underline{q}$ the formula $q\, \underline{1}$.
\end{definition}

Thus, beside the addition of the modal operators $\Diamond$ and $\Box$,  we also added atomic propositional letters and, to adhere with well-established notation, we used upper-case letters for the variables and greek letters for formulas.  Without introducing any significant ambiguity, we also write \Lukmu\ as a shorthand for {\L}ukasiewicz  Modal $\mu$-Calculus.  For mild convenience in the encoding of $\PCTL$ below, we consider a version with unlabelled modalities and propositional letters. However, the approach of this paper easily adapts to a labeled version of $\Lukmu$. 

The probabilistic (or quantitative) modal $\mu$-calculus (qL$\mu$) of \cite{HM96,MM97,MM07,AM04} is the fragment of \Lukmu\ obtained by removing the {\L}ukasiewicz connectives ($\odot$, $\oplus$) and \emph{scalar multiplications} ($q\, \phi$) by rationals numbers in $[0,1]$.

 {\L}ukasiewicz  Modal $\mu$-Calculus formulas are interpreted over PNTS's as we now describe.

\begin{definition}
\label{definition:interpretation}
Given a PNTS $(S,\rightarrow)$, an \emph{interpretation} for the variables and propositional letters is a function $\rho:(\texttt{Var}\uplus\texttt{Prop}) \rightarrow (S\rightarrow[0,1])$ such that $\rho(\negate{P})(x)\!=\! 1- \rho(P)(x)$. Given a function $f\!:\!S\rightarrow [0,1]$ and $X\!\in\!\texttt{Var}$ we define the interpretation  $\rho[f/X]$  as $\rho[f/X](X)\!=\!f$ and $\rho[f/X](Y)\!=\!\rho(Y)$, for $X\neq Y$.
\end{definition}

\begin{definition}
\label{definition:semantics}
The semantics of a $\Lukmu$ formula $\phi$ interpreted over $(S,\rightarrow)$ with interpretation $\rho$ is a function $\sem{\phi}_\rho: S\rightarrow[0,1]$ defined inductively on the structure of $\phi$ as follows:

\begin{center}
\begin{tabular}{ l l l}
$\sem{X}_{\rho}=\rho(X)$    & & $\sem{q\, \phi}_{\rho}(x)= q \cdot \sem{\phi}_\rho (x)$\\
$\sem{P}_{\rho}=\rho(P)$ & & $\sem{\negate{P}}_{\rho}= 1-\rho(P)$ \\
$\sem{\phi \sqcup \psi}_\rho(x)= \max \{ \sem{\phi}_{\rho}(x)Ê  ,  \sem{\psi}_{\rho}(x)  \}$ & & 
$\sem{\phi \sqcap \psi}_\rho(x)= \min \{ \sem{\phi}_{\rho}(x)Ê  ,  \sem{\psi}_{\rho}(x)  \}$ \\ 
$\sem{\phi \oplus \psi}_\rho(x)= \min \{ 1, \sem{\phi}_{\rho}(x)Ê  +  \sem{\psi}_{\rho}(x)  \}$ &&
$\sem{\phi \odot \psi}_\rho(x)= \max \{ 0, \sem{\phi}_{\rho}(x)Ê +   \sem{\psi}_{\rho}(x) -1  \}$ \\ 
$\sem{\Diamond \phi}_{\rho}(x)= \displaystyle \bigsqcup_{x\rightarrow d}\Big( \sum_{y\in S} d(y)\sem{\phi}_\rho(y)\Big)$&&
$\sem{\Box \phi}_{\rho}(x)= \displaystyle \bigsqcap_{x\rightarrow d}\Big(\sum_{y\in S} d(y)\sem{\phi}_\rho(y) \Big)$\\
$\sem{\mu X.\phi}_\rho = \lfp \big(  f \mapsto  \sem{\phi}_{\rho[f/X]}\big)$ & & 
$\sem{\mu X.\phi}_\rho = \gfp \big(  f \mapsto \sem{\phi}_{\rho[f/X]}\big)$\\
\end{tabular}
\end{center}
As in Section~\ref{section_lmu}, the interpretation of every operator is monotone, thus the existence of least and greatest points in the last two clauses is guaranteed by the the Knaster-Tarski theorem. 
\end{definition}

The operation $\Dual(\_ )$ (see Section \ref{section_lmu}) is extended to $\Lukmu$ modal formulas by defining $\Dual(\Diamond\phi)\!=\!\Box(\Dual(\phi))$ and $\Dual(\Box\phi)\!=\!\Diamond(\Dual(\phi))$ as expected.

\section{Encoding $\PCTL$ in {\L}ukasiewicz  Modal $\mu$-Calculus}

In this section, we show that the {\L}ukasiewicz  Modal $\mu$-Calculus of the previous section can encode the logic $\PCTL$. We refer the reader to Section 10.6.2 of \cite{BaierKatoenBook} for an extensive presentation of PCTL.


\subsection{Summary of $\PCTL$}

The notions of  \emph{paths}, \emph{schedulers} and \emph{Markov runs} in a PNTS are at the basis of the logic $\PCTL$. 
\begin{definition}\label{run_PLTS}
For a given PNTS $\lts=(S,\rightarrow)$ the binary relation $\leadsto_\lts \ \subseteq S\times S $ is defined as follows: $\leadsto_\lts= \{ (s,t) \ | \ \exists d. (s \rightarrow d \ \wedge \ d(t)>0)\}$. Note that $s\not \rightarrow$ if and only if $s\not\leadsto$.  We refer to $(S,\leadsto)$ as the \emph{graph underlying $\lts$}.
\end{definition}

\begin{definition}
A \emph{path} in a PNTS $\lts=(S,\rightarrow)$ is an ordinary path in the graph $(S,\leadsto)$, i.e., a finite or infinite sequence $\{s_i\}_{i\in I}$ of states such that $s_i\leadsto s_{i+1}$, for all $i+1\in I$. We say that a path is  \emph{maximal} if either it is infinite or it is finite and its last entry is a state $s_n$ without successors, i.e., such that $s_n\not \leadsto$. We denote with $\textnormal{P}(\lts)$ the set of all maximal paths in $\lts$. The set $\textnormal{P}(\lts)$  is endowed with the topology generated by the basic open sets $U_{\vv{s}}=\{  \vv{r}\ | \  \vv{s}\sqsubseteq \vv{r} \}$ where $\vv{s}$ is a finite sequence of states and $\sqsubseteq$ denotes the prefix relation on sequences. The space $\textnormal{P}(\lts)$ is always $0$-dimensional, i.e., the basic sets $U_{\vv{s}}$ are both open and closed and thus form a Boolean algebra. We denote with $\textnormal{P}(s)$ the open set $U_{\{s\}}$ of all maximal paths having $s$ as first state.
\end{definition}

\begin{definition}\label{scheduler_def}
A \emph{scheduler} in a PNTS $(S,\rightarrow)$ is a partial function $\sigma$ from non-empty finite sequences $s_0.\dots s_n$ of states to probability distributions $d\in\mathcal{D}(S)$ such that  $\sigma(s_0.\dots s_n)$ is not defined if and only if $s_n \not\rightarrow$ and, if $\sigma$ is defined at $s_0 . \dots s_n$ with $\sigma(s_0.\dots s_n)=d$, then $s_n\rightarrow d$ holds. A pair $(s,\sigma)$ is called a \emph{Markov run} in $\mathcal{L}$ and denoted by $M^s_\sigma$. It is clear that each Markov run $M^s_\sigma$ can be identified with a (generally) infinite Markov chain (having a tree structure) whose vertices are finite sequences of states and having $\{s\}$ as root.
\end{definition}

Markov runs are useful as they naturally induce probability measures on the space $\textnormal{P}(\mathcal{L})$.

\begin{definition}
Let $\mathcal{L}=(S,\rightarrow)$ be a PNTS and $M^s_\sigma$ a Markov run. We define the measure $m^s_\sigma$ on $\textnormal{P}(\mathcal{L})$ as the unique (by Carath{\'e}odory extension theorem) measure specified by the following assignment of basic open sets:
\begin{center}
$\displaystyle m^s_\sigma\big( \textnormal{U}_{s_0.\dots s_n} \big) = \prod^{n-1}_{i=0} d_i (s_{i+1})$
\end{center} 
 where $d_i=\sigma(s_0.\dots s_i)$ and $\prod\emptyset = 1$. It is simple to verify that $m^s_\sigma$ is a probability measure, i.e., $m^s_\sigma(\textnormal{P}(\mathcal{L}))=1$. We refer to $m^s_\sigma$ as the probability measure on $\textnormal{P}(\mathcal{L})$ induced by the Markov run $M^s_\sigma$.
\end{definition}

We are now ready to specify the syntax and semantics of  $\PCTL$.
\begin{definition}
Let the letter $P$ range over a countable set of propositional symbols $\texttt{Prop}$. The class of  $\PCTL$ \emph{state-formulas} $\phi$ is generated by the following two-sorted grammar:
\begin{center}
$ \phi::= \  \textnormal{true} \ | \ P \ | \ \neg \phi \ | \ \phi \vee \phi \ |  \ \exists  \psi\ |  \ \forall  \psi\ | \ \mathbb{P}^\exists_{\rtimes q}\psi \ | \ \ \mathbb{P}^\forall_{\rtimes q}\psi$
\end{center}
with $q\in\mathbb{Q}\cap[0,1]$ and  $\rtimes\in\{ >, \geq\}$, where \emph{path-formulas} $\psi$ are generated by the simple grammar: $\psi::= \  \circ \phi \ | \ \phi_1 \mathcal{U}\phi_2$. Adopting standard terminology, we refer to the connectives $\circ$ and $\mathcal{U}$ as the \emph{next} and  \emph{until}  operators, respectively.
\end{definition}

\begin{definition}
Given a PNTS $(S,\rightarrow)$, a \emph{$\PCTL$-interpretation} for the propositional letters is a function $\rho:\texttt{Prop}\rightarrow 2^S$, where $2^S$ denotes the powerset of $S$. 
\end{definition}

\begin{definition}
Given a PNTS $(S,\rightarrow)$ and a  $\PCTL$-interpretation $\rho$ for the propositional letters, the semantics $\gsem{\phi}_\rho$ of a  $\PCTL$ state-formula $\phi$ is a subset of $S$ 
\begin{itemize}
\item $\gsem{ \textnormal{true} }_\rho= S$,
$\gsem{P}_\rho = \rho(P)$,
$\gsem{\phi_1 \vee \phi_2}_{\rho}= \gsem{\phi_1}_\rho \cup \gsem{\phi_2}_{\rho}$,
$\gsem{\neg \phi }_{\rho}= S\setminus \gsem{\phi}_{\rho}$,
 \item $s\in \gsem{\exists \psi }_{\rho}$ if and only there exists $\vv{s}\in \textnormal{P}(s)$ such that that $\vv{s}\in \sem{\psi}_\rho$
  \item $s\in\gsem{\forall \psi }_{\rho}$ if and only for all $\vv{s}\in \textnormal{P}(s)$ it holds that $\vv{s}\in \gsem{\psi}_\rho$
 \item $s\in\gsem{\mathbb{P}^\exists_{\rtimes q} \psi }_{\rho}$ if and only $\big(\bigsqcup_{\sigma} m^s_\sigma (\gsem{\psi}_\rho)\big) \rtimes q$
  \item $s\in\gsem{\mathbb{P}^\forall_{\rtimes q} \psi }_{\rho}$ if and only $\big(\bigsqcap_{\sigma} m^s_\sigma (\gsem{\psi}_\rho)\big) \rtimes q$
\end{itemize}
where $\sigma$ ranges over schedulers and the semantics $\gsem{\psi}_\rho$ of path formulas is a subset of $\textnormal{P}(\mathcal{L})$ 
defined as:
\begin{itemize}
\item $\vv{s} \in \gsem{\circ \phi}_\rho $ if and only if $| \vv{s} | \geq 2$ (i.e., $\vv{s}=s_0.s_1.\dots$) and $s_1\in \gsem{\phi}_\rho$,
\item $\vv{s} \in \gsem{\phi_1 \mathcal{U} \phi_2}_\rho$ if and only if  $\exists n.\big( (s_n\in\gsem{\phi_2}_\rho) \wedge \forall m< n. (s_m\in\gsem{\phi_1}_\rho)\big)$,
\end{itemize}
\end{definition}
It is simple to verify that, for all path-formulas $\psi$, the set $\gsem{\psi}_\rho$ is Borel measurable \cite{BaierKatoenBook}. Therefore the definition is well specified. Note how the logic  $\PCTL$ can express probabilistic properties, by means of the connectives $\mathbb{P}^\forall_{\rtimes q}$ and $\mathbb{P}^\exists_{\rtimes q}$, as well as (qualitative) properties of the graph underlying the PNTS by means of the quantifiers $\forall$ and $\exists$.


\subsection{The encoding of  $\PCTL$}\label{section_encoding}

As it turns out, the crucial property of $\Lukmu$ which allow the encoding of $\PCTL$ is the possibility of defining the derived threshold modalities of Definition \ref{thresholds_encoding} which, crucially, are not expressible in the probabilistic $\mu$-calculus (qL$\mu$) of  \cite{HM96,MM97,MM07,AM04}. We reformulate here the result of Proposition \ref{semantics_threshold1} which straightforwardly extends to the present setting.

\begin{proposition}\label{semantics_threshold}
Let $(S,\rightarrow)$ be a PNTS, $\phi$ a $\Lukmu$ formula and $\rho$ an interpretation of the variables. Then it holds that:
\begin{center}
$\sem{\mathbb{P}_{\rtimes q}\phi}_\rho(s)= \left\{     \begin{array}{l  l}
 						1 & $if $\sem{\phi}_\rho (s) \rtimes q \\
						0 & $otherwise$\\
						\end{array}      \right.$
\end{center}
\end{proposition}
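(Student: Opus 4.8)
The plan is to reduce Proposition~\ref{semantics_threshold} to the non-modal Proposition~\ref{semantics_threshold1} by observing that the threshold modalities $\mathbb{P}_{\rtimes q}$ are defined purely term-theoretically (Definition~\ref{thresholds_encoding}) in terms of $\mu$, $\nu$, $\oplus$, $\odot$ and scalar multiplication, none of which are modal constructs. Concretely, for a fixed PNTS $(S,\rightarrow)$, a state $s$, and an interpretation $\rho$, the value $\sem{\mathbb{P}_{\rtimes q}\phi}_\rho(s)$ is obtained by applying the operations $f \mapsto \lfp(g \mapsto g \strongor f)$ etc.\ to the single real number $v := \sem{\phi}_\rho(s) \in [0,1]$, exactly as in the non-modal calculus where the operations act pointwise. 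So the first step is to make precise the sense in which the interpretation of a closed-in-the-relevant-variable subterm $\mathbb{P}_{\rtimes q}\phi$ "factors through" its argument's value, which follows from the inductive clauses of Definition~\ref{definition:semantics}: each connective is interpreted pointwise over $S$, so the semantic effect at the single point $s$ of the derived operator $\mathbb{P}_{\rtimes q}(-)$ depends only on the function-value of its argument at $s$.

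The core of the argument then mirrors the four cases of Definition~\ref{thresholds_encoding}. For $\mathbb{P}_{>0}\phi = \mu y.(y \strongor \phi)$, unfolding the semantics gives $\sem{\mathbb{P}_{>0}\phi}_\rho(s) = \lfp\big(a \mapsto \min(a + v,1)\big)$ where $v = \sem{\phi}_\rho(s)$; this least fixed point is $1$ if $v > 0$ and $0$ if $v = 0$, which is the desired value (this is precisely the computation behind equation~(\ref{equation:discontinuous}) and is covered by Proposition~\ref{semantics_threshold1}). Dually, $\mathbb{P}_{=1}\phi = \nu y.(y \strongand \phi)$ gives $\gfp\big(a \mapsto \max(a + v - 1, 0)\big)$, which is $1$ if $v = 1$ and $0$ otherwise. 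The remaining two cases $\mathbb{P}_{>r}\phi = \mathbb{P}_{>0}(\phi \strongand \underline{1-r})$ and $\mathbb{P}_{\geq r}\phi = \mathbb{P}_{=1}(\phi \strongor \underline{1-r})$ reduce to the first two by noting $\sem{\phi \strongand \underline{1-r}}_\rho(s) = \max(v - r, 0) > 0 \iff v > r$ and $\sem{\phi \strongor \underline{1-r}}_\rho(s) = \min(v + 1 - r, 1) = 1 \iff v \geq r$.

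In fact the cleanest way to present this is not to redo the fixed-point computations but simply to invoke Proposition~\ref{semantics_threshold1} directly: the key claim to establish is that for any PNTS, state $s$, and interpretation $\rho$, we have $\sem{\mathbb{P}_{\rtimes q}\phi}_\rho(s) = (\mathbb{P}_{\rtimes q}\,z)(\sem{\phi}_\rho(s))$, where on the right $\mathbb{P}_{\rtimes q}\,z$ is the non-modal $\mu$-term in a fresh variable $z$ from Definition~\ref{thresholds_encoding}, evaluated at the single rational (or real) argument $\sem{\phi}_\rho(s)$. This claim is proved by a routine structural induction on the $\mu$-term $\mathbb{P}_{\rtimes q}\,z$ (it has no modalities, propositional letters, or variables other than $z$), using that all non-modal connectives of $\Lukmu$ are interpreted pointwise and agree with the non-modal semantics of Section~\ref{section_lmu}, and that $\lfp$/$\gfp$ of a pointwise-monotone operator on $S \to [0,1]$ restricted to a single point coincides with the $\lfp$/$\gfp$ on $[0,1]$. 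Proposition~\ref{semantics_threshold1} then finishes.

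The main obstacle is not mathematical depth but bookkeeping: one must be careful about the fact that in the modal setting $\sem{\phi}_\rho$ is a function $S \to [0,1]$ rather than a number, so that the statement "the fixed point computation at $s$ depends only on the value at $s$" requires the observation that the operator $f \mapsto \sem{y \strongor \phi}_{\rho[f/y]}$ is "local" — its value at $s$ is $\min(f(s) + \sem{\phi}_\rho(s), 1)$, which depends on $f$ only through $f(s)$ (here using that $\phi$ does not contain $y$, so $\sem{\phi}_{\rho[f/y]} = \sem{\phi}_\rho$). Given this locality lemma, the least/greatest fixed points in $S \to [0,1]$ and in $[0,1]$ are computed componentwise and the reduction to Proposition~\ref{semantics_threshold1} is immediate. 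I would state this locality observation explicitly and then declare the rest "as in Proposition~\ref{semantics_threshold1}", since the whole point of the proposition is that the modal extension changes nothing about these derived operators.
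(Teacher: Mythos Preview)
Your proposal is correct and matches the paper's own treatment: the paper does not give a separate proof of Proposition~\ref{semantics_threshold} but simply introduces it as a reformulation of Proposition~\ref{semantics_threshold1} that ``straightforwardly extends to the present setting,'' which is precisely the reduction you describe via the pointwise/locality observation. Your write-up is in fact more detailed than what the paper provides.
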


The following lemma is also useful.

\begin{lemma}\label{lemma_leadsto}
Let $(S,\rightarrow)$ be a PNTS, $\phi$ a $\Lukmu$ formula and $\rho$ an interpretation of the variables. Then:
\begin{itemize}
\item $\sem{\mathbb{P}_{>0}(\Diamond X)}_\rho (s) = 1$ iff\ $\ \exists t. \big(s\leadsto t \wedge \rho(X)(t)>0\big)$
\item $\sem{\mathbb{P}_{=1}(\Box X)}_\rho (s)=1$ iff\ $\  \forall t. \big(s\leadsto t \rightarrow \rho(X)(t)=1\big)$
\end{itemize}
\end{lemma}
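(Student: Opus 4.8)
The plan is to prove the two equivalences by unwinding the threshold-modality macros of Definition~\ref{thresholds_encoding} through Proposition~\ref{semantics_threshold} (the modal version), and then evaluating the single-step modal operators directly from Definition~\ref{definition:semantics}. For the first item, recall that $\mathbb{P}_{>0}\psi = \mu y.(y \oplus \psi)$, and by Proposition~\ref{semantics_threshold} we have $\sem{\mathbb{P}_{>0}(\Diamond X)}_\rho(s) = 1$ iff $\sem{\Diamond X}_\rho(s) > 0$. So the task reduces to showing that $\sem{\Diamond X}_\rho(s) > 0$ iff there exists $t$ with $s \leadsto t$ and $\rho(X)(t) > 0$. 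Here I would just expand the definition $\sem{\Diamond X}_\rho(s) = \bigsqcup_{s \rightarrow d}\big(\sum_{y \in S} d(y)\,\rho(X)(y)\big)$ and argue both directions: if the supremum is positive then some $d$ with $s \rightarrow d$ has $\sum_y d(y)\rho(X)(y) > 0$, which forces some $y$ with $d(y) > 0$ and $\rho(X)(y) > 0$; since $d(y) > 0$ witnesses $s \leadsto y$ (Definition~\ref{run_PLTS}), take $t = y$. Conversely, if $s \leadsto t$ via some $d$ with $d(t) > 0$ and $\rho(X)(t) > 0$, then that summand alone makes $\sum_y d(y)\rho(X)(y) > 0$, hence the supremum is positive. (One should note the convention that an empty supremum is $0$, so the "only if" direction correctly fails when $s \not\rightarrow$, matching $s \not\leadsto$.)

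For the second item, $\mathbb{P}_{=1}\psi = \nu y.(y \odot \psi)$, and Proposition~\ref{semantics_threshold} gives $\sem{\mathbb{P}_{=1}(\Box X)}_\rho(s) = 1$ iff $\sem{\Box X}_\rho(s) \geq 1$, i.e.\ iff $\sem{\Box X}_\rho(s) = 1$ (since values lie in $[0,1]$). Expanding, $\sem{\Box X}_\rho(s) = \bigsqcap_{s \rightarrow d}\big(\sum_y d(y)\rho(X)(y)\big)$, so this equals $1$ iff for every $d$ with $s \rightarrow d$ we have $\sum_y d(y)\rho(X)(y) = 1$. Since $\sum_y d(y) = 1$ and $0 \leq \rho(X)(y) \leq 1$, the sum $\sum_y d(y)\rho(X)(y)$ equals $1$ exactly when $\rho(X)(y) = 1$ for all $y$ in the support of $d$ (i.e.\ all $y$ with $d(y) > 0$). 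Quantifying over all $d$ with $s \rightarrow d$ and all $t$ in their supports, this is precisely the condition $\forall t.(s \leadsto t \rightarrow \rho(X)(t) = 1)$. Again one checks the edge case $s \not\rightarrow$: the infimum over the empty set is $1$, so $\sem{\Box X}_\rho(s) = 1$ vacuously, and the right-hand side is also vacuously true since $s \not\leadsto$.

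This is essentially a routine unfolding argument; the only genuinely delicate points are the handling of infima/suprema over the possibly-empty set $\{d \mid s \rightarrow d\}$ (which must line up with the convention that $s \not\rightarrow \iff s \not\leadsto$), and the elementary observation that a convex combination $\sum_y d(y) c_y$ of values $c_y \in [0,1]$ is positive iff some $c_y$ with $d(y) > 0$ is positive, and equals $1$ iff every $c_y$ with $d(y) > 0$ equals $1$. I would present the proof as a short chain of iff's for each item, citing Proposition~\ref{semantics_threshold} for the first step and Definition~\ref{definition:semantics} together with Definition~\ref{run_PLTS} for the remaining ones. I do not expect any real obstacle here; the lemma is a bookkeeping step in preparation for the $\PCTL$ encoding.
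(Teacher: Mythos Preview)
Your proposal is correct and follows essentially the same approach as the paper's own proof: reduce via Proposition~\ref{semantics_threshold} to the condition $\sem{\Diamond X}_\rho(s)>0$ (resp.\ $\sem{\Box X}_\rho(s)=1$), then unfold the modal semantics and use the elementary fact about when a convex combination of $[0,1]$-values is positive (resp.\ equal to $1$). Your write-up is in fact more detailed than the paper's, which treats the $\mathbb{P}_{=1}(\Box X)$ case as ``similar'' and does not explicitly discuss the empty-transition edge case.
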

\begin{proof}
Note that $\sem{\Diamond X}_\rho(s)>0$ holds if and only if there exists $s\rightarrow d$ such that $\sum_{t\in S} d(t)\rho(X)(t)>0$ holds. This is the case if and only if $d(t)\!>\!0$ (i.e., $s\leadsto t$) and $\rho(X)(t)\!>\!0$, for some $t\!\in\! S$. The result then follows by Proposition \ref{semantics_threshold}. The case for $\mathbb{P}_{=1}(\Box X)$ is similar.
\end{proof}

\begin{remark}\label{remark_leadsto}
When considering $\{0,1\}$-valued interpretations for $X$, the macro formula $\mathbb{P}_{>0}\Diamond$ expresses the meaning of the diamond modality in classical modal logic with respect to the graph $(S,\leadsto)$ underlying the PNTS. Similarly, $\mathbb{P}_{=1}\Box$ corresponds to the the classical box modality.
\end{remark}

We are now ready to define the encoding of  $\PCTL$ into $\Lukmu$.

\begin{definition}\label{ENCODING_PCTL_E}
We define the encoding $\mathbf{E}$ from  $\PCTL$ formulas to closed $\Lukmu$ formulas (where $\boxdot\phi$ stands for the $\Lukmu$ formula $\Box\phi \sqcap \Diamond\underline{1}$), by induction on the structure of the  $\PCTL$ formulas $\phi$ as follows:
\begin{enumerate}
\item $\label{enc_case_prop}\mathbf{E}(P)=P$,
\item $\mathbf{E}(\textnormal{true})\!=\! \underline{1}$,
\item $\mathbf{E}(\phi_1\vee \phi_2)= \mathbf{E}(\phi_1) \sqcup \mathbf{E}(\phi_2)$,
\item \label{enc_case_neg} $\mathbf{E}(\neg \phi)= \Dual( \mathbf{E}(\phi))$,
\item \label{enc_case_exists_circ}$\mathbf{E}(\exists( \circ  \phi))= \mathbb{P}_{>0}\big(\Diamond\mathbf{E}(\phi)\big)$, 
\item \label{enc_case_forall_circ}$\mathbf{E}(\forall( \circ  \phi))= \mathbb{P}_{=1}\big(\boxdot\mathbf{E}(\phi)\big)$, 
\item \label{enc_case_exists} $\mathbf{E}(\exists(\phi_1 \ \mathcal{U} \ \phi_2))= \mu X. \Big(\mathbf{E}(\phi_2) \sqcup \big(  \mathbf{E}(\phi_1) \sqcap \mathbb{P}_{>0}(\Diamond X)\big)\Big)$,
\item \label{enc_case_forall} $\mathbf{E}(\forall(\phi_1 \ \mathcal{U} \ \phi_2) )=  \mu X. \Big(\mathbf{E}(\phi_2) \sqcup \big(  \mathbf{E}(\phi_1) \sqcap \mathbb{P}_{=1}(\boxdot X)\big)\Big)$,
\item \label{case_threshold_1} $\mathbf{E}(\mathbb{P}^{\exists}_{\rtimes q }(\circ \phi))=\mathbb{P}_{\rtimes q} \big( \Diamond\mathbf{E}(\phi)\big)$,
\item \label{case_threshold_2}$\mathbf{E}(\mathbb{P}^{\forall}_{\rtimes q }(\circ \phi))=\mathbb{P}_{\rtimes q} \big( \boxdot\mathbf{E}(\phi)\big)$,
\item \label{case_threshold_3}$\mathbf{E}(\mathbb{P}^\exists_{\rtimes q}(\phi_1 \mathcal{U} \phi_2))= \mathbb{P}_{\rtimes q} \Big( \mu X. \Big(\mathbf{E}(\phi_2) \sqcup \big(  \mathbf{E}(\phi_1) \sqcap \Diamond X\big)\Big) \Big)$,
\item \label{case_threshold_4}$\mathbf{E}(\mathbb{P}^{\forall}_{\rtimes q}(\phi_1 \mathcal{U} \phi_2))= \mathbb{P}_{\rtimes q} \Big( \mu X. \Big(\mathbf{E}(\phi_2) \sqcup \big(  \mathbf{E}(\phi_1) \sqcap \boxdot X\big)\Big) \Big)$,
\end{enumerate}
Note that Case \ref{enc_case_neg} is well defined since $ \mathbf{E}(\phi)$ is closed by construction.
\end{definition}

\begin{remark}\label{remark_fragment_plmu}
The only occurrences of \L ukasiewicz operators $\{\oplus,\odot\}$ and scalar multiplication $(q\, \phi)$ in encoded $\PCTL$ formulas appear in the formation of the threshold modalities $\mathbb{P}_{\rtimes q}(\_ )$. Thus, $\PCTL$ can be also seen as a fragment of qL$\mu$ extended with threshold modalities as primitive operations. With the aid of these modalities the encoding is, manifestly, a straightforward adaption of the standard encoding of CTL into the  modal $\mu$-calculus (see, e.g., \cite{Stirling96}). 
\end{remark}

The main result of this section establishes the correctness of our encoding.
For convenience in the formulation, we identify subsets  with their corresponding characteristic 
functions. 

\begin{theorem}
\label{theorem:pctl}
For every PNTS $(S,\rightarrow)$,  $\PCTL$-interpretation $\rho\!:\!\textnormal{Prop}\rightarrow(S\rightarrow\!\{0,1\})$ of the propositional letters and  $\PCTL$ formula $\phi$, the equality $\gsem{\phi}_\rho(s)=\sem{\mathbf{E}(\phi)}_\rho(s)$ holds, for all $s\in S$.
\end{theorem}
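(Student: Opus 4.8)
The proof goes by induction on the structure of the $\PCTL$ state-formula $\phi$, with the path-formula cases handled inside the relevant state-formula cases. The plan is to treat the propositional and Boolean cases ($P$, $\textnormal{true}$, $\vee$, $\neg$) as immediate from the definitions of $\mathbf{E}$, the semantics of $\Lukmu$, and the fact that $\Dual$ computes complementation on closed $\{0,1\}$-valued formulas (as noted after the definition of $\Dual$ in Section~\ref{section_lmu}); here one uses that $\mathbf{E}(\phi)$ is always closed and, by an auxiliary observation, always $\{0,1\}$-valued when $\rho$ is $\{0,1\}$-valued, which itself follows because every clause of $\mathbf{E}$ either is a threshold modality (Proposition~\ref{semantics_threshold} guarantees $\{0,1\}$-values), a $\sqcup/\sqcap$ of $\{0,1\}$-valued formulas, a $\Dual$ of such, or a $\mu X.(\dots)$ whose body preserves $\{0,1\}$-valuedness and whose approximants therefore stay in $\{0,1\}$. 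So a first step I would record explicitly is: \emph{Lemma.} For $\{0,1\}$-valued $\rho$, $\sem{\mathbf{E}(\phi)}_\rho$ is $\{0,1\}$-valued.

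The \textbf{next}-operator cases are where the $\boxdot$ modality earns its keep. For $\exists(\circ\phi)$ one computes, using Lemma~\ref{lemma_leadsto}, that $\sem{\mathbb{P}_{>0}(\Diamond\mathbf{E}(\phi))}_\rho(s)=1$ iff $\exists t.(s\leadsto t \wedge \sem{\mathbf{E}(\phi)}_\rho(t)>0)$, which by the induction hypothesis and $\{0,1\}$-valuedness is exactly $s\in\gsem{\exists(\circ\phi)}_\rho$. For $\forall(\circ\phi)$ the subtlety is that $\gsem{\forall(\circ\phi)}_\rho$ is vacuously satisfied at a deadlock state $s\not\leadsto$ (no maximal path from $s$ has length $\geq 2$), whereas a bare $\Box$ would also be vacuously $1$ there; one must check that $\boxdot\mathbf{E}(\phi)=\Box\mathbf{E}(\phi)\sqcap\Diamond\underline 1$ still behaves correctly. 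In fact at a deadlock $\sem{\Diamond\underline 1}_\rho(s)=\bigsqcup_{s\to d}(\dots)=0$ (empty supremum), so $\sem{\mathbb{P}_{=1}(\boxdot\mathbf{E}(\phi))}_\rho(s)=0$ there — but $s\not\leadsto$ also means $s\not\rightarrow$, hence $\textnormal{P}(s)=\{s\}$ and every maximal path from $s$ trivially lies in $\gsem{\circ\phi}_\rho$... so $s\in\gsem{\forall(\circ\phi)}_\rho$; this looks like a genuine mismatch. Resolving it cleanly — checking the precise $\PCTL$ convention for $\circ$ at deadlocks, or noting that the intended reading makes $\gsem{\circ\phi}$ require $|\vv s|\geq 2$ so the deadlock path is \emph{not} in it, matching $0=0$ — is the point I would slow down on; the definition given ($\vv s\in\gsem{\circ\phi}_\rho$ iff $|\vv s|\geq 2$ and $s_1\in\gsem{\phi}_\rho$) indeed excludes the deadlock path, so $s\notin\gsem{\exists(\circ\phi)}$ and $s\in\gsem{\forall(\circ\phi)}$ vacuously, and one must verify $\boxdot$ gives $1$, not $0$ — re-examining, at a deadlock $\gsem{\forall\circ\phi}$ holds vacuously and we need $\sem{\mathbb{P}_{=1}\boxdot\mathbf{E}(\phi)}=1$, forcing a different reading of $\boxdot$ at deadlocks; I would reconcile this carefully, most likely concluding the $\Diamond\underline 1$ conjunct is there precisely to make $\forall$ behave like $\exists$'s dual under $\Dual$ while the deadlock case is handled by the $\circ$-semantics convention. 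This bookkeeping is the main obstacle.

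The \textbf{until}-operator cases without probability, $\exists(\phi_1\,\mathcal U\,\phi_2)$ and $\forall(\phi_1\,\mathcal U\,\phi_2)$, follow the classical $\CTL$-into-$\mu$-calculus pattern. One shows that $f=\sem{\mathbf{E}(\exists(\phi_1\,\mathcal U\,\phi_2))}_\rho$, the least fixed point of $g\mapsto \sem{\mathbf{E}(\phi_2)}_\rho \sqcup(\sem{\mathbf{E}(\phi_1)}_\rho \sqcap \sem{\mathbb{P}_{>0}(\Diamond g)}_\rho)$ (note $\mathbb{P}_{>0}\Diamond$ acts as the classical diamond on $(S,\leadsto)$ by Remark~\ref{remark_leadsto}), is the characteristic function of $\gsem{\exists(\phi_1\,\mathcal U\,\phi_2)}_\rho$. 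The $\supseteq$-direction (soundness): by induction on the length of the witnessing path, any $s$ with a path satisfying $\phi_1\,\mathcal U\,\phi_2$ is in some finite approximant $\mu^{(n)}$, hence in the lfp. The $\subseteq$-direction (completeness): by the approximation characterisation of the lfp over the finite/complete lattice $S\to\{0,1\}$ (or by fixed-point induction), if $f(s)=1$ then $s$ reaches a $\phi_2$-state through $\phi_1$-states along $\leadsto$, which yields the required maximal path. The $\forall$-until case is the exact dual using $\mathbb{P}_{=1}\boxdot$ as the classical box. Finally, the four threshold cases $\mathbb{P}^{\exists/\forall}_{\rtimes q}(\circ\phi)$ and $\mathbb{P}^{\exists/\forall}_{\rtimes q}(\phi_1\,\mathcal U\,\phi_2)$ combine two ingredients: first, that the inner $\qLmu$-formula (with genuine $\Diamond$, $\boxdot$, and $\mu$, \emph{no} threshold inside) computes the real-valued optimal probability $\bigsqcup_\sigma m^s_\sigma(\gsem{\psi}_\rho)$ resp.\ $\bigsqcap_\sigma m^s_\sigma(\gsem{\psi}_\rho)$ — this is the known game-theoretic / fixed-point semantics of $\qLmu$ for reachability-type objectives, which I would cite or sketch via the standard Bellman-equation argument (the $\mu$ captures that reaching $\phi_2$-states has the supremum/infimum over schedulers given by the least solution of the quantitative until-equation, and for $\circ\phi$ a single $\Diamond/\boxdot$ step suffices); and second, Proposition~\ref{semantics_threshold}, which turns the real value $v$ into $1$ iff $v\rtimes q$, matching the $\PCTL$ clause for $\mathbb{P}^{\exists/\forall}_{\rtimes q}$ exactly. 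For the $\circ$ subcase one also re-uses the deadlock analysis from the \textbf{next} cases. I expect the quantitative-semantics-of-$\qLmu$ step to require the most care if done from scratch, but it is standard and can largely be imported; the genuinely fiddly part of the whole proof remains the deadlock/$\boxdot$ reconciliation in the universal-next and universal-until cases.
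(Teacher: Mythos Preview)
Your overall strategy---structural induction, with the qualitative cases reduced to the classical $\CTL$-into-$\mu$-calculus argument via Remark~\ref{remark_leadsto}, and the quantitative threshold cases split into ``inner formula computes the optimal probability'' plus Proposition~\ref{semantics_threshold}---is exactly the paper's approach. Two points deserve comment.

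First, your deadlock discussion is muddled and you end on the wrong conclusion. At a deadlock state $s$ one has $\textnormal{P}(s)=\{\{s\}\}$, a \emph{non-empty} set containing the single length-$1$ path $\{s\}$. Since $\gsem{\circ\phi}_\rho$ requires $|\vv s|\geq 2$, this path is \emph{not} in $\gsem{\circ\phi}_\rho$; hence $s\notin\gsem{\exists(\circ\phi)}_\rho$ \emph{and} $s\notin\gsem{\forall(\circ\phi)}_\rho$ (the universal is not vacuous---there is a path, and it fails). On the encoding side, $\sem{\boxdot\mathbf{E}(\phi)}_\rho(s)=0$ because $\sem{\Diamond\underline 1}_\rho(s)=0$, so $\sem{\mathbb{P}_{=1}(\boxdot\mathbf{E}(\phi))}_\rho(s)=0$. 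Both sides are $0$: there is no mismatch. The $\Diamond\underline 1$ conjunct in $\boxdot$ is there precisely to force value $0$ at deadlocks, where a bare $\Box$ would give $1$; this is equally crucial in cases \ref{enc_case_forall}, \ref{case_threshold_2} and \ref{case_threshold_4}. Once you see this, the ``main obstacle'' you flag dissolves.

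Second, for the quantitative until cases you plan to import the equality
\[
\bigsqcup_\sigma m^s_\sigma(\gsem{\phi_1\,\mathcal U\,\phi_2}_\rho)\;=\;\sem{\mu X.\,\mathbf{E}(\phi_2)\sqcup(\mathbf{E}(\phi_1)\sqcap\Diamond X)}_\rho(s)
\]
from the game semantics of $\qLmu$ or a Bellman-style argument. The paper explicitly acknowledges that route as valid, but then, to stay self-contained, proves the equality directly: the $\leq$ direction by induction on the finite witness depth $k$ of the until, and the $\geq$ direction by transfinite induction over the ordinal approximants of the least fixed point, constructing $\epsilon$-optimal schedulers stagewise. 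If you intend to cite rather than reprove, be aware that this is where the paper invests most of its effort; your sketch is acceptable as a plan but is not yet a proof.
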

\begin{proof}
The proof goes by induction on the structure of $\phi$. Cases \ref{enc_case_prop}--\ref{enc_case_neg} of Definition \ref{ENCODING_PCTL_E} are trivial. Case \ref{enc_case_exists_circ} follows directly from Lemma \ref{lemma_leadsto}. Observing that $\sem{\boxdot\phi}_\rho(s)=0$ if $s\not\leadsto$ and $\sem{\boxdot\phi}_\rho(s)=\sem{\Box \phi}_\rho(s)$ otherwise, the correctness of Case \ref{enc_case_forall_circ} directly follows from Lemma \ref{lemma_leadsto}. Consider cases \ref{enc_case_exists} and \ref{enc_case_forall}. The encoding is of the form $\mu X. (F \sqcup (G \sqcap H(X))$, where $F$ and $G$ (by induction hypothesis) and $H(X)$ (by Proposition \ref{semantics_threshold}) are all $\{0,1\}$-valued. Therefore the functor $f\mapsto \sem{F\sqcup (G\sqcap H(X))}_{\rho[f/X]}$ maps $\{0,1\}$-valued functions to $\{0,1\}$-valued functions and has only $\{0,1\}$-valued fixed-points. It then follows from Remark \ref{remark_leadsto} that the correctness of  these two cases can  be proved with the standard technique used to prove the correctness of the encoding of CTL into Kozen's $\mu$-calculus (see, e.g., \cite{Stirling96}).
Consider Case \ref{case_threshold_1}. It is immediate to verify that $\bigsqcup_{\sigma}\{ m^s_\sigma( U)\}$, where $U=\gsem{\circ \phi}_\rho=\bigcup\{ U_{\{s.t\}} \ | \ t\in \gsem{\phi}_\rho\}$, is equal (by induction hypothesis) to $\sem{\Diamond \mathbf{E}(\phi)}_\rho(s)$. The desired equality $\gsem{\mathbb{P}^{\exists}_{\rtimes q}\circ\phi}_{\rho}=\sem{\mathbb{P}_{\rtimes q }\Diamond\mathbf{E}(\phi)}_\rho$ then follows by Proposition \ref{semantics_threshold}. Case \ref{case_threshold_2} is similar. The two cases \ref{case_threshold_3} and \ref{case_threshold_4} are similar, thus we just consider case \ref{case_threshold_3}. Let $\phi= \mathbb{P}^\exists_{\rtimes q}(\psi)$ and $\psi= \phi_1 \mathcal{U} \phi_2$.  We denote with $\Psi$ the set of paths $\gsem{\psi}_\rho$. Denote by $F(X)$ the formula $\mathbf{E}(\phi_2) \sqcup (  \mathbf{E}(\phi_1) \sqcap \Diamond X)$.  By application of Lemma \ref{semantics_threshold} it is  sufficient to prove that the equality 
\begin{equation}\label{eq_1_proof_correctness}\bigsqcup_\sigma \{ m^s_\sigma(\Psi)\} = \sem{\mu X. F( X)\big) }_\rho(s)
\end{equation} holds. Note that $\mu X.F(X)$ can be expressed as an equivalent qL$\mu$ formula by substituting the closed subformulas $\mathbf{E}(\phi_1)$ and $\mathbf{E}(\phi_2)$ with two fresh atomic predicates $P_{i}$ with interpretations $\rho(P_i)=\sem{\mathbf{E}(\phi_i)}$. The equality can then be proved by simple arguments based on the game-semantics of qL$\mu$ (see, e.g., \cite{MM07} and \cite{MIO2012a}), similar to the ones used to prove that  the Kozen's $\mu$-calculus formula $\mu X. (P_2 \vee ( P_1 \wedge \Diamond X))$ has the same denotation of the CTL formula $\exists( P_1 \mathcal{U} P_2)$ (see, e.g., \cite{Stirling96}). However, to keep the paper self-contained,  we provide instead a direct proof of Equation \ref{eq_1_proof_correctness} based on denotational semantics.

 We first show that the inequality
\begin{equation}\label{proof_eq_1}
 \bigsqcup_\sigma \{ m^s_\sigma(\Psi)\} \leq \sem{\mu X.F(X)\big) }_\rho(s)
\end{equation}
holds. Define $\Psi_k\!=\!\{ {s_0.s_1.s_2\dots} \ | \ s_0=s \textnormal{ and }\exists n\leq k . \big( s_n\in \gsem{\phi_2}_\rho \wedge \forall m<n.( s_m\in\gsem{\phi_1}_\rho ) \big)\}$. Clearly $\Psi=\bigcup_k \Psi_k$. Suppose Inequality \ref{proof_eq_1} does not hold. Then there exists some $k$ and scheduler $\sigma$ such that 
\begin{equation}\label{proof_eq_2}
m^s_\sigma(\Psi_k)> \sem{\mu X. F(X) }_\rho(s)
\end{equation}
We prove that this is not possible by induction on $k$.  In the $k=0$ case, since we are assuming $m^s_\sigma(\Psi_0)>0$, it holds that $s\!\in\!\gsem{\phi_2}_\rho$. By inductive hypothesis on $\phi_2$, we know that $\sem{\mathbf{E}(\phi_2)}_\rho(s)=1$ and this implies that $\sem{\mu X.F(X)}_{\rho}(s)\!=\!1$, which is a contradiction with Inequality \ref{proof_eq_2}.  Consider the case $k+1$. Note that if $s\!\in\!\gsem{\phi_2}_\rho$ then $\sem{\mu X.F(X) }_\rho(s)\!=\!1$ as before, contradicting Inequality \ref{proof_eq_2}. So assume $s\!\not\in\!\gsem{\phi_2}_\rho$. Since we are assuming $m^s_\sigma(\Psi_{k+1})\!>\!0$ it  must be the case that $s\!\in\!\gsem{\phi_1}_\rho$. Similarly, $m^s_\sigma(\Psi_{k+1})>0$ and $s\not\in\gsem{\phi_2}_\rho$ imply that $s\not\rightarrow$ does not hold. This means (see Definition \ref{scheduler_def}) that $\sigma(\{s\})$ is defined. Let $d=\sigma(\{s\})$ and observe that $m^s_\sigma(\Psi_{k+1})\!=\! \sum_{t\in S}d(t)m^t_{\sigma^\prime}(\Psi_k)$, where $\sigma^\prime(s_0,s_1,\dots,s_n)=\sigma(s,s_0,s_1,\dots,s_n)$.  By induction on $k$ we know that the inequality $m^t_{\sigma^\prime}(\Psi_k)\leq \sem{\mu X.F(X) }_\rho(t)$ holds for every $t\! \in\! S$. Thus, by definition of the semantics of $\Diamond$, we obtain
$ m^s_\sigma( \Psi_k) \leq \sem{\Diamond \big( \mu X.F(X) \big) }_\rho(s)$. 
Recall that we previously assumed $s\not\in\gsem{\phi_2}_\rho$ and $s\in\gsem{\phi_1}_\rho$. Hence the equality
$$\sem{\Diamond \big( \mu X.F(X) \big) }_\rho(s) = \sem{\mathbf{E}(\phi_2) \sqcup ( \mathbf{E}(\phi_1) \sqcap \big( \Diamond \mu X.F(X) \big) )  }_{\rho}(s)$$
holds. The formula on the right is just the unfolding $F(\mu X.F(X))$ of $\mu X.F(X)$. This implies the desired contradiction with Inequality \ref{proof_eq_2}.

We now prove that also the opposite inequality
\begin{equation}\label{proof_eq_3}
 \bigsqcup_\sigma \{ m^s_\sigma(\psi)\} \geq \sem{\mu X. F(X) }_\rho(s)
\end{equation}
holds.  By Knaster-Tarski theorem, $\sem{\mu X.F(X)}_\rho=\bigsqcup_\alpha \sem{F(X)}_{\rho^{\alpha}}$, where $\alpha$ ranges over the ordinals and $\rho^{\alpha}\!=\!\rho[\bigsqcup_{\beta<\alpha} \sem{F(X)}_{\rho^\beta}/ X]$. We prove Inequality \ref{proof_eq_3} by showing, by transfinite induction, that for every ordinal $\alpha$ and $\epsilon >0$,  the inequality 
\begin{equation}\label{proof_eq_4}
\bigsqcup_\sigma \{ m^s_\sigma(\psi)\} > \sem{F(X) }_{\rho^\alpha}(s)-\epsilon
\end{equation} 
holds, for all $s\in S$.  The case for $\alpha=0$ is immediate since $\sem{F(X)}_{\rho^0}(s)\!>\!0$ if and only if $\sem{\mathbf{E}(\phi_2)}_\rho(s)\!=\!1$ and this implies  $\bigsqcup_\sigma \{ m^s_\sigma(\psi)\} \!=\!1$. Consider  the case $\alpha\!=\!\beta+1$. If $\sem{\mathbf{E}(\phi_2)}_\rho(s)\!=\!1$ then Inequality \ref{proof_eq_3} holds as above. Thus assume $\sem{\mathbf{E}(\phi_2)}_\rho(s)\!=\!0$. Note that $\sem{F(X)}_{\rho^0}(s)>0$ only if $\sem{\mathbf{E}(\phi_1)}(s)\!=\!1$. Thus assume $\sem{\mathbf{E}(\phi_1)}_\rho^\beta (s)\!=\!1$. Under these assumption,  $\sem{F(X)}_{\rho^{\alpha}}= \sem{\Diamond F(X)}_{\rho^\beta}$ as it is immediate to verify. By definition of the semantics of $\Diamond$ we have: 
$$
\sem{\Diamond F(X)}_{\rho^\beta}(s)= \bigsqcup_{s\rightarrow d} \big( \displaystyle \sum_{t\in S} d(t)\sem{F(X)}_{\rho^\beta}(t)\big)
$$
By induction hypothesis on $\beta$ we know that for every $t\!\in\! S $ and $\epsilon \!>\!0$ the inequality $\sem{F(X)}_\rho^{\beta}(t)\!<\! \bigsqcup_\sigma \{ m^t_\sigma(\psi)\} + \epsilon$ holds. Therefore the inequality  
$$
\sem{\Diamond F(X)}_{\rho^\beta}(s)  < \bigsqcup_{s\rightarrow d} \big( \displaystyle \sum_{t\in S} d(t)\Big( \bigsqcup_\sigma \{ m^t_\sigma(\psi)\} + \epsilon \Big) \big)
$$
holds for every $\epsilon \!>\!0$. For each transition $s\!\rightarrow\! d$ and scheduler $\sigma$ define $\sigma^d$ as $\sigma^d(\{s\})\!=\!d$ and $\sigma^d(s.t_0.\dots)=\sigma(t_0\dots)$. It follows from unfolding of definitions that that 
$$
\bigsqcup_{s\rightarrow d} \big( \displaystyle \sum_{t\in S} d(t)\Big( \bigsqcup_\sigma \{ m^t_\sigma(\psi)\} + \epsilon \Big) \big) =  \bigsqcup_{\sigma} \{ m^s_{\sigma^{d}}(\psi)\} + \epsilon
$$
and this conclude the proof of Inequality \ref{proof_eq_4} for the case $\alpha\!=\!\beta+1$. Lastly, the case for $\alpha$ a limit ordinal follows straightforwardly from the inductive hypothesis on $\beta<\alpha$.

\end{proof}


\section{Model checking}
\label{section:model-checking}

The \emph{model checking} (or formula evaluation) \emph{problem} of the \L ukasiewicz modal $\mu$-calculus is the following: given a closed $\Lukmu$ formula  $\phi$, a finite rational transition system $(S,\rightarrow)$ and a state $s\!\in\!S$, compute the value $\sem{\phi}(s)$.

In this section we reduce this problem to the evaluation of  \L ukasiewicz $\mu$-terms. We do this by effectively producing a closed rational $\mu$-term $t_s(\phi)$, with the property that $t_s (\phi)\! =\! \sem{\phi}(s)$, whence the rational value of $\sem{\phi}(s)$ can be calculated; for example, by the algorithm in Section~\ref{section:algorithm}. The construction of $t_s(\phi)$ is similar in spirit to 
the reduction of modal $\mu$-calculus model checking to a system of nested boolean fixed-point equations in Section 4 of~\cite{Mader1995}.

We assume, without loss of generality, that all fixed-point operators in $\phi$ bind distinct variables. Let $X_1, \dots, X_m$ be the variables appearing in $\phi$.
We write $\sigma_i \, X_i. \, \psi_i$ for the unique subformula of $\phi$ in which $X_i$ is bound. The strict (i.e., irreflexive) \emph{domination} relation $X_i \dominates X_j$ between variables is defined to mean that $\sigma_j \, X_j. \, \psi_j$ occurs as a subformula in $\psi_i$.

Suppose $|S| = n$. For each $s \in S$, we translate $\phi$ to a $\mu$-term $t_s(\phi)$ containing at most $mn$ variables $x_{i,s'}$, where $1 \leq i \leq m$ and $s' \in S$.
The translation is defined using a more general function
$t^\Gamma_s$, defined on subformulas of $\phi$, where 
$\Gamma \subseteq
\{1, \dots, m\} \times S$ is an  auxiliary component keeping track of the states at which variables have previously been encountered. Given $\Gamma$ and $(i,s) \in \{1, \dots, m\} \times S$, we define:
\[
\Gamma \dominates (i,s) \; = \; 
(\Gamma \cup \{(i,s)\}) \backslash
\{ (j,s') \in \Gamma \mid X_i \dominates X_j\} \enspace .
\]
This operation is used in the definition of Figure \ref{reduction_model_checking_figure} 
to `reset' subordinate fixed-point variables
whenever a new variable that dominates them is declared. 
\begin{figure}
\begin{align*}
t^\Gamma_s (X_i) & =  
       \begin{cases} 
           x_{i,s} & \text{if $(i,s) \in \Gamma$} \\
           \sigma_i \,  x_{i,s} . \; t^{\Gamma \dominates (i,s)}_s(\psi_i) & \text{if $(i,s) \notin \Gamma$} 
       \end{cases}
\\
t^\Gamma_s (P) & =  \underline{\rho(P)(s)}
\\
t^\Gamma_s (\negate{P}) & =  \underline{1-\rho(P)(s)}
\\
t^\Gamma_s (q \, \phi) & =  q \, t^\Gamma_s (\phi)
\\
t^\Gamma_s (\phi_1 \bullet \phi_2) & = 
     t^\Gamma_s (\phi_1) \bullet t^\Gamma_s (\phi_2) \qquad \bullet \in \{\weakor, \weakand, \strongor, \strongand\}
\\
t^\Gamma_s (\Diamond \phi) & = 
  \bigsqcup_{s \rightarrow d} ~ \bigoplus_{s' \in S} ~ d(s') \; t^\Gamma_{s'} (\phi) 
\\
t^\Gamma_s (\Box \phi) & = 
  \bigsqcap_{s \rightarrow d} ~ \bigoplus_{s' \in S} ~ d(s') \; t^\Gamma_{s'} (\phi) 
\\
t^\Gamma_s (\sigma_i \, X_i. \, \psi_i) & = 
  \sigma_i \,  x_{i,s} . \; t^{\Gamma \cup \{(i,s)\}}_s(\psi_i) 
\end{align*}
\caption{Reduction of the Model-Checking problem to \L ukasiewicz $\mu$-terms evaluation.}
\label{reduction_model_checking_figure}
\end{figure}
\noindent
This is well defined because changing from
$\Gamma$ to $\Gamma \dominates (i,s)$ or to $\Gamma \cup \{(i,s)\}$ strictly increases the function 
\begin{equation}
\label{equation:function}
i \mapsto | \{ (i,s) \mid (i,s) \in \Gamma\}| \colon \{1, \dots, m\} \to \{0, \dots, n\}
\end{equation}
under the lexicographic order on functions relative to $\dominates$.

For a 
$\mu$-term $t'$ containing variables $x_{i,s'}$ as above, we write $\FVI(t')$ 
for the set of pairs in the product set $\{1, \dots, m\} \times S$ that index free variables in 
$t'$. The translation defined in Figure~\ref{reduction_model_checking_figure} clearly satisfies: 
\begin{lemma}
Suppose  $\phi'$ is a subformula of $\phi$ and $\Gamma \subseteq \{1, \dots, m\} \times S$.
Then  $\FVI(t^\Gamma_s(\phi')) \subseteq \Gamma$.
\end{lemma}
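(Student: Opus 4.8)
The plan is to prove the statement by structural induction on the subformula $\phi'$, following exactly the clauses of the translation in Figure~\ref{reduction_model_checking_figure}. The claim is that $\FVI(t^\Gamma_s(\phi')) \subseteq \Gamma$ for every subformula $\phi'$ of $\phi$, every state $s$, and every $\Gamma \subseteq \{1, \dots, m\} \times S$. Most cases are immediate, so the bulk of the work is bookkeeping; the only genuinely delicate point is the variable case $\phi' = X_i$, because that is where the auxiliary set $\Gamma$ is actually modified.

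First I would dispose of the base cases: for $\phi' = P$ or $\phi' = \negate{P}$ the translation is a closed term ($\underline{\rho(P)(s)}$ or $\underline{1-\rho(P)(s)}$), so $\FVI = \emptyset \subseteq \Gamma$ trivially. Next, the homomorphic cases: for $q\,\phi$, for the binary connectives $\phi_1 \bullet \phi_2$, and for the two modal cases $\Diamond\phi$ and $\Box\phi$, the free-variable index set of the translated term is a union of the index sets of the translated immediate subterms (possibly at different states $s'$, in the modal cases), each of which is contained in $\Gamma$ by the induction hypothesis; hence so is the union. Here I would note that the induction hypothesis must be applied with the \emph{same} $\Gamma$ but possibly varying state, which is fine since the statement quantifies over all $s$.

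The two cases that actually touch $\Gamma$ are $\phi' = X_i$ with $(i,s) \notin \Gamma$, and $\phi' = \sigma_i\,X_i.\,\psi_i$. For the latter, $t^\Gamma_s(\sigma_i\,X_i.\,\psi_i) = \sigma_i\,x_{i,s}.\,t^{\Gamma \cup \{(i,s)\}}_s(\psi_i)$; by the induction hypothesis applied to the subformula $\psi_i$ with the enlarged set $\Gamma \cup \{(i,s)\}$, we get $\FVI(t^{\Gamma \cup \{(i,s)\}}_s(\psi_i)) \subseteq \Gamma \cup \{(i,s)\}$, and then binding $x_{i,s}$ removes the index $(i,s)$, leaving $\FVI$ inside $\Gamma$. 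The main obstacle — really the only subtle step — is the case $\phi' = X_i$ with $(i,s) \notin \Gamma$, where $t^\Gamma_s(X_i) = \sigma_i\,x_{i,s}.\,t^{\Gamma \dominates (i,s)}_s(\psi_i)$. One must observe that $X_i$ being a subformula of $\phi$ means the binder $\sigma_i\,X_i.\,\psi_i$ is a subformula of $\phi$, so $\psi_i$ is too and the induction hypothesis applies to it; applying it with the set $\Gamma \dominates (i,s)$ gives $\FVI(t^{\Gamma \dominates (i,s)}_s(\psi_i)) \subseteq \Gamma \dominates (i,s) = (\Gamma \cup \{(i,s)\}) \setminus \{(j,s') \in \Gamma \mid X_i \dominates X_j\}$, and binding $x_{i,s}$ removes $(i,s)$, so the resulting $\FVI$ is contained in $\Gamma \setminus \{(j,s') \in \Gamma \mid X_i \dominates X_j\} \subseteq \Gamma$. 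The well-foundedness remark following Figure~\ref{reduction_model_checking_figure} (that the translation terminates, via the strictly increasing lexicographic measure~(\ref{equation:function})) is what makes the recursion in this case legitimate; I would invoke it so that the structural induction is really an induction along the terminating recursion tree of $t^\Gamma_s$ rather than along subformula size alone, since the variable case recurses on $\psi_i$ which need not be a proper subformula of $X_i$ in the naive syntactic sense. With that observation in place, all cases close and the lemma follows.
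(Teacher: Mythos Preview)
Your argument is correct. The paper itself offers no proof of this lemma---it merely asserts that the translation ``clearly satisfies'' the stated inclusion---so your write-up is in fact more detailed than the paper's own treatment. The case analysis you give is right, and you have correctly identified the one genuinely non-trivial point: in the variable case $\phi' = X_i$ with $(i,s)\notin\Gamma$, the recursive call is on $\psi_i$, which is not a proper subformula of $X_i$, so naive structural induction on $\phi'$ does not apply. Your remedy---performing the induction along the well-founded recursion tree of $t^\Gamma_s$, justified by the strictly increasing lexicographic measure~(\ref{equation:function})---is exactly the right fix, and is implicit in the paper's remark that the translation is well defined.
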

Also, the translation can be defined for any finite PNTS, whether rational or not. When the PNTS is rational, so is the resulting $\mu$-term.

Now consider any $\Gamma \subseteq \{1, \dots, m\} \times S$. By a \emph{$\Gamma$-environment}, we mean any function $\gamma \colon \Gamma \to [0,1]$. We use the function $\gamma$ to generate (the variable part of) an 
interpretation $\rho^\gamma \colon \{1, \dots, m\} \rightarrow (S\rightarrow[0,1])$, as in Definition~\ref{definition:interpretation}, but with variables \texttt{Var} replaced with their indices. This is defined to be the unique interpretation that satisfies the equation below.
\[
\rho^\gamma (i) \; = \; 
\begin{cases}
\lfp \left({f \mapsto  s \mapsto 
\begin{cases} \gamma(i,s) & \text{if $(i,s) \in \Gamma$} \\
              \sem{\psi_i}_{\rho^\gamma[f/X_i]} & \text{if $(i,s) \notin \Gamma$} 
\end{cases}}
\right) & \text{if $\sigma_i = \mu$}
\\[4ex]
\gfp \left({f \mapsto  s \mapsto 
\begin{cases} \gamma(i,s) & \text{if $(i,s) \in \Gamma$} \\
              \sem{\psi_i}_{\rho^\gamma[f/X_i]} & \text{if $(i,s) \notin \Gamma$} 
\end{cases}}
\right) & \text{if $\sigma_i = \nu$}
\end{cases}
\]
Such a unique interpretation exists because the value of $\sem{\psi_i}_{\rho}$ depends only on the interpretation of $\rho$ on $X_i$ and variables that strictly dominate it.

\begin{lemma}
Let $\phi'$ be a subformula of $\phi$. Suppose $\Gamma \subseteq \{1, \dots, m\} \times S$ satisfies the property that, for every $(i,s) \in \Gamma$, the variable $X_i$  is not bound in $\phi'$. Let $\gamma$ be any $\Gamma$-environment. Then, for all $s \in S$, it holds that
$t^\Gamma_s(\phi')(\gamma) =  \sem{\phi'}_{\rho^\gamma}(s)$, where on the right hand side of the equality we are using $\gamma$ to assign values to the 
$x_{i,s}$ variables in $t^\Gamma_s(\phi')$ in the obvious way.
\end{lemma}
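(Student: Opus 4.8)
The plan is to prove the statement by induction on the structure of the subformula $\phi'$. The base cases — $\phi' = P$, $\phi' = \negate{P}$, and $\phi' = X_i$ with $(i,s) \in \Gamma$ — are all immediate: for $P$ we have $t^\Gamma_s(P)(\gamma) = \rho(P)(s) = \sem{P}_{\rho^\gamma}(s)$ by construction, similarly for $\negate{P}$, and for a free variable $X_i$ we have $t^\Gamma_s(X_i) = x_{i,s}$, which $\gamma$ evaluates to $\gamma(i,s) = \rho^\gamma(i)(s) = \sem{X_i}_{\rho^\gamma}(s)$ by the defining equation of $\rho^\gamma$ (using $(i,s)\in\Gamma$). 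The cases of scalar multiplication $q\,\phi$ and the binary connectives $\bullet \in \{\weakor, \weakand, \strongor, \strongand\}$ follow at once from the induction hypothesis applied to the immediate subformulas, since the $\mu$-term semantics of Section~\ref{section_lmu} and the $\Lukmu$ semantics of Definition~\ref{definition:semantics} interpret these operators by the same functions on $[0,1]$. For the modal cases $\Diamond\phi$ and $\Box\phi$, I would observe that $\bigoplus_{s'\in S} d(s')\,t^\Gamma_{s'}(\phi)$ evaluated at $\gamma$ equals $\sum_{s'\in S} d(s')\,\sem{\phi}_{\rho^\gamma}(s')$, because the coefficients $d(s')$ form a probability distribution, so the truncating sum $\strongor$ never actually truncates; then applying $\bigsqcup_{s\to d}$ (resp. $\bigsqcap_{s\to d}$) matches the defining clause for $\sem{\Diamond\phi}_{\rho^\gamma}(s)$ (resp. $\sem{\Box\phi}$), using the induction hypothesis on $\phi$ at each successor state $s'$ (the hypothesis on $\Gamma$ carries over unchanged since $\Diamond,\Box$ bind no variables).

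The two genuinely delicate cases are the fixed-point subformula $\sigma_i\,X_i.\,\psi_i$ and the bound-variable occurrence $X_i$ with $(i,s)\notin\Gamma$. For the former, $t^\Gamma_s(\sigma_i X_i.\psi_i) = \sigma_i\, x_{i,s}.\, t^{\Gamma\cup\{(i,s)\}}_s(\psi_i)$, whose value at $\gamma$ is $\lfp$ (or $\gfp$) of the map sending $r' \in [0,1]$ to $t^{\Gamma\cup\{(i,s)\}}_s(\psi_i)$ evaluated at $\gamma$ extended by $x_{i,s}\mapsto r'$. By the induction hypothesis applied to $\psi_i$ with the enlarged set $\Gamma\cup\{(i,s)\}$ and the enlarged environment $\gamma[r'/(i,s)]$ (noting $\psi_i$ does not bind $X_i$ or any variable that $X_i$ dominates, since all bound variables are distinct), this equals $\sem{\psi_i}_{\rho^{\gamma[r'/(i,s)]}}(s)$. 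The key remaining point is to relate $\rho^{\gamma[r'/(i,s)]}$ to $\rho^\gamma[f/X_i]$ for the appropriate $f$: precisely, one shows that for the unique fixed point $f^* = \sem{\sigma_i X_i.\psi_i}_{\rho^\gamma[\cdot/X_i]}$ and $r' = f^*(s)$, the interpretation $\rho^{\gamma[r'/(i,s)]}$ agrees with $\rho^\gamma[f^*/X_i]$ at $X_i$ and at all variables relevant to $\psi_i$. Taking $\lfp$/$\gfp$ over $r'$ on the term side then corresponds exactly to taking $\lfp$/$\gfp$ over $f$ on the semantics side, restricted to the $s$-component. I would make this precise by a small auxiliary lemma stating that $s'\mapsto t^{\Gamma\cup\{(i,s)\}}_{s}(\psi_i)$ traces out the one-step unfolding of the fixed point at state $s$, and that the monotone map on $[0,1]$ on the term side is the $s$-th coordinate of the monotone map on $(S\to[0,1])$ on the semantics side.

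For the bound-variable occurrence case, $t^\Gamma_s(X_i) = \sigma_i\, x_{i,s}.\, t^{\Gamma\dominates(i,s)}_s(\psi_i)$; this is handled by essentially the same argument as the previous case, the only difference being that $\Gamma\dominates(i,s)$ is used instead of $\Gamma\cup\{(i,s)\}$ — which is correct because re-encountering $X_i$ at state $s$ should reset precisely the variables strictly dominated by $X_i$, matching the fact that in the semantics the value $\rho^\gamma(i)(s)$ is computed afresh as a fixed point whose body $\psi_i$ re-reads those subordinate variables from scratch. The hypothesis on $\Gamma$ (that no $(j,s')\in\Gamma$ has $X_j$ bound in $\phi'=X_i$) combined with the well-definedness remark about the lexicographic decrease of~(\ref{equation:function}) ensures the recursion terminates and that the set $\Gamma\dominates(i,s)$ satisfies the inductive hypothesis's side condition for $\psi_i$. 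The main obstacle, then, is purely bookkeeping: carefully verifying that $\rho^{\gamma'}$ for the various modified environments $\gamma'$ agrees with the $\rho^\gamma[f/X_i]$ constructed by the Knaster--Tarski fixed point on exactly the variables that influence the subformula under consideration — a matter of unwinding the mutually recursive definition of $\rho^\gamma$ and exploiting that $\sem{\psi_i}_\rho$ depends only on $\rho$ at $X_i$ and at variables strictly dominating $X_i$.
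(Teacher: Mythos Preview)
Your structural induction on $\phi'$ has a genuine gap in the case $\phi' = X_i$ with $(i,s)\notin\Gamma$. There the translation is $t^\Gamma_s(X_i) = \sigma_i\,x_{i,s}.\,t^{\Gamma\dominates(i,s)}_s(\psi_i)$, and you propose to apply the induction hypothesis to $\psi_i$. But $\psi_i$ is \emph{not} a subformula of $X_i$: it is the body of the binder of $X_i$, which sits strictly above $X_i$ in $\phi$. A structural induction on $\phi'$ therefore gives you no hypothesis to invoke for $\psi_i$ at this point. This case is unavoidable: whenever the translation, after passing through a modality, reaches an occurrence of $X_i$ at a state $s'$ different from the state at which the binder $\sigma_i X_i.\psi_i$ was unfolded, one has $(i,s')\notin\Gamma$ and the definition re-expands $\psi_i$. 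Your observation that the lexicographic increase of the function~(\ref{equation:function}) ``ensures the recursion terminates'' shows you have spotted the relevant well-founded order, but termination of the translation $t^\Gamma_s$ is not the same thing as having a valid induction hypothesis in the correctness proof.

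The paper's remedy is precisely to promote this well-founded order to an outer induction: one argues by an outer induction on the value of~(\ref{equation:function}) determined by $\Gamma$ (assuming the statement for all $\Gamma'$ with strictly higher value), together with an inner structural induction on $\phi'$. Since both $\Gamma\cup\{(i,s)\}$ and $\Gamma\dominates(i,s)$ strictly increase that function, the outer hypothesis supplies what is needed in the fixed-point case and in the variable case with $(i,s)\notin\Gamma$. With the induction reorganised in this way, the remainder of your plan---the base cases, connectives, the modal cases via the observation that $\bigoplus_{s'} d(s')\,(\cdot)$ does not truncate, and the unwinding of $\rho^{\gamma'}$ against $\rho^\gamma[f/X_i]$---lines up with the paper's argument.
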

\begin{proof}
The lemma is proved by an outer induction on the lexicographic order on the function (\ref{equation:function}) determined by $\Gamma$ (as induction hypothesis, we assume the property holds for higher values of this function),
and by an inner induction on the structure of $\phi'$.
We consider the two interesting cases: when $\phi'$ is a variable, and when it is a fixed-point formula.

We treat first the case in which $\phi'$ is $X_i$. If $(i,s) \in \Gamma$ then 
we simply have:
\[
t^\Gamma_s(X_i)(\gamma) \; = \; x_{i,s}(\gamma) \; = \; \gamma(i,s) \; = \; \sem{X_i}_{\rho^\gamma}(s) 
\enspace .
\]
If $(i,s) \notin \Gamma$ then
$t^\Gamma_s(X_i)\, = \,
\sigma_i \,  x_{i,s} . \; t^{\Gamma \dominates (i,s)}_s(\psi_i)$. 
Without loss of generality, we consider the case that $\sigma$ is a least-fixed-point $\mu$. 
For $v \in [0,1]$, we define $\gamma \dominates [v/(i,s)]$ to be the $(\Gamma\dominates (i,s))$-environment that assigns $v$ to $(i,s)$ and agrees with $\gamma$ on $\Gamma \cap (\Gamma\dominates (i,s))$, and we write $\gamma'$ for the restriction of $\gamma$ (equivalently of $\gamma \dominates [v/(i,s)]$) to $\Gamma \cap (\Gamma\dominates (i,s))$. Then we have:
\begin{align*}
t^\Gamma_s(X_i)(\gamma)\; & = \; (\mu   x_{i,s} . \; t^{\Gamma \dominates (i,s)}_s(\psi_i))(\gamma) \\
& = \; \lfp (v \mapsto t^{\Gamma \dominates (i,s)}_s(\psi_i)(\gamma\dominates[v/(i,s)])
\\
& = \; \lfp (v \mapsto \sem{\psi_i}_{\rho^{\gamma\dominates [v/(i,s)]}}(s)) & & \text{(by induction hypothesis)} \\
&  = \; \sem{X_i}_{\rho^{\gamma'}}(s)  & & \text{(by definition of $\rho^{\gamma'}$)} \\
&  = \; \sem{X_i}_{\rho^{\gamma}}(s)  & & \text{(because $\rho^{\gamma}$ and $\rho^{\gamma'}$ agree on $X_i$)}
\end{align*}

In the case that $\psi'$ is a least fixed point $\mu_i X_i. \, \psi_i$, we have
\begin{align*}
t^\Gamma_s (\mu  X_i. \, \psi_i) (\gamma) \; & = \;
  (\mu   x_{i,s} . \; t^{\Gamma \cup \{(i,s)\}}_s(\psi_i))(\gamma)
\\
& = \; \lfp (v \mapsto t^{\Gamma \cup \{(i,s)\}}_s(\psi_i)(\gamma[v/(i,s)]))
\\
& = \; \lfp (v \mapsto \sem{\psi_i}_{\rho^{\gamma \dominates [v/(i,s)]}}(s))
&& \text{(by induction hypothesis)}
\\
& = \; \sem{\mu X_i.\, \psi_i}_{\rho^{\gamma}}(s) \enspace .
\end{align*}
Here, the last equality follows from the definition of $\rho^\gamma$, since $\Gamma \cap (\{i\}\times S) = \emptyset$, due to the assumption relating $\Gamma$ and bound variables. 

The case in which $\psi'$ is a greatest fixed point is analogous.
\end{proof}

The main result of this section is merely a special case of the previous lemma, 
relativized to an arbitrary formula $\phi$ and finite PNTS.
\begin{proposition}
For any closed  $\Lukmu$ formula $\phi$, and state $s$ in a finite PNTS, it holds that $\sem{\phi}(s)\! =\! t^\emptyset_s(\phi)$.
\end{proposition}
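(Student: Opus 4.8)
The plan is to derive the proposition directly as the specialisation of the preceding lemma to the empty context $\Gamma = \emptyset$. Observe that when $\phi$ is closed, every fixed-point variable $X_i$ is bound within $\phi$ itself, so the hypothesis of the lemma (``for every $(i,s) \in \Gamma$, the variable $X_i$ is not bound in $\phi'$'') is vacuously satisfied when we take $\phi' = \phi$ and $\Gamma = \emptyset$. The unique $\emptyset$-environment $\gamma$ is the empty function, so there are no $x_{i,s}$ variables to assign, and $t^\emptyset_s(\phi)$ is a closed rational $\mu$-term whose value is simply a rational number in $[0,1]$.

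The key steps, in order, are as follows. First I would note that $\rho^\gamma$ for the empty environment $\gamma$ coincides — on all relevant variables — with the interpretation used in the denotational semantics $\sem{\cdot}$ of $\phi$: since $\phi$ is closed, $\sem{\phi}(s)$ does not depend on $\rho$ at all, so we may take $\rho$ to be $\rho^\gamma$ without loss of generality. (If the paper's convention for $\sem{\phi}$ already fixes a canonical interpretation of propositional letters, the same canonical $\rho$ should be used on both sides, and the argument is unaffected since propositional letters are handled identically by $t^\Gamma_s$ via $\underline{\rho(P)(s)}$.) Second, I would invoke the lemma with $\phi' := \phi$, $\Gamma := \emptyset$, and $\gamma$ the empty $\emptyset$-environment, to conclude $t^\emptyset_s(\phi)(\gamma) = \sem{\phi}_{\rho^\gamma}(s)$. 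Third, I would observe that the left-hand side $t^\emptyset_s(\phi)(\gamma)$ is just the value of the closed $\mu$-term $t^\emptyset_s(\phi)$ (there being no free variables to instantiate, by the first lemma of the section applied with $\Gamma = \emptyset$), which we write simply as $t^\emptyset_s(\phi)$. Chaining these equalities gives $\sem{\phi}(s) = t^\emptyset_s(\phi)$, as required.

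I do not anticipate a genuine obstacle here: the proposition is explicitly flagged in the text as ``merely a special case of the previous lemma.'' The only point requiring a moment's care is the bookkeeping around interpretations — making sure that the $\rho$ implicitly fixed in the statement $\sem{\phi}(s)$ is reconciled with the constructed interpretation $\rho^\gamma$. Since $\phi$ is closed this is immediate for the variable part, and for propositional letters one simply takes the same $\rho$ throughout (the encoding $t^\Gamma_s$ reads off $\rho(P)(s)$ literally). A secondary, purely expository point is that one should remark that $t^\emptyset_s(\phi)$ is rational whenever the PNTS is finite rational, which follows from the observation already made in the text that the translation preserves rationality; this is what licenses feeding $t^\emptyset_s(\phi)$ to the algorithm of Section~\ref{section:algorithm}, though it is not strictly needed for the equality itself.
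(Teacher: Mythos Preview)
Your proposal is correct and follows exactly the approach of the paper, which offers no proof beyond the remark that the proposition is ``merely a special case of the previous lemma, relativized to an arbitrary formula $\phi$ and finite PNTS.'' Your careful unpacking of the $\Gamma = \emptyset$ specialisation --- the vacuous hypothesis, the empty environment, and the reconciliation of interpretations --- is precisely the bookkeeping the paper leaves implicit.
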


\section{Related and future work}

The first encodings of probabilistic temporal logics in a probabilistic version of the modal $\mu$-calculus were given in~\cite{CPN99}, where a version $\PCTLstar$, tailored to processes exhibiting probabilistic but not nondeterministic choice,  was translated into a 
non-quantitative probabilisitic variant of the $\mu$-calculus, which included 
explicit (probabilistic) path quantifiers but disallowed fixed-point alternation. 

In their original paper on quantitative  $\mu$-calculi~\cite{HM96}, Huth and Kwiatkowska attempted a model checking algorithm for alternation-free formulas in the version of $\Lukmu$ with $\strongor$ and $\strongand$ but without $\weakand$, $\weakor$ and
scalar multiplication. Subsequently, 
several authors have addressed the problem of 
computing (sometimes approximating) 
fixed points for monotone functions combining
linear (sometimes polynomial) expressions with $\min$ and $\max$ operations;
see \cite{GS2011} for a summary. However, such work has focused on (efficiently) finding outermost (simultaneous) fixed-points for systems of equations whose underlying monotone functions are continuous. The nested fixed points considered in the present paper give rise to the complication of non-continuous functions.

It would be interesting to run experimental comparison of the iterative algorithm for calculating the value of a $\mu$-term against the reduction to  linear arithmetic. As mentioned in Section \ref{section:comparison}, we expect the direct algorithm to work better in practice than the non-elementary upper bound on its complexity given by our crude analysis, suggests. 
Furthermore, as a natural generalization of the approximation approach to computing fixed points, the direct algorithm should be amenable to optimizations such as the simultaneous solution of adjacent fixed points of the same kind, and the reuse of previous approximations when applicable due to  monotonicity considerations. 
It would also be interesting to obtain improved bounds on the complexity of calculating the value of a {\L}ukasiewicz $\mu$-term. Kyriakos Kalorkoti (private communication) has suggested one way of obtaining an exponential upper bound on the run-time. As regards lower bounds, it is easy to show that the problem is at least as hard as the \emph{simple stochastic game} problem \cite{Condon1992}. We wonder if computing the value of $\mu$-terms can also be shown to be in $\textrm{NP} \cap \textrm{co-NP}$.

Our results on $\Lukmu$ are a contribution towards the development of a robust theory of fixed-point probabilistic logics. The beginnings of a systematic study are carried out in \cite{MioThesis}, where an even richer logic is considered, 
including the operations of multiplication $( x \times y \!=\! x y)$ and co-multiplication ($x \otimes y\! =\! x + y -xy$). However, this extension has the disadvantage that simple formulas can have irrational (though algebraic) values. For example, consider the term $t= \mu x. \big(((x\otimes x)\times (x\otimes x))\sqcup \underline{\frac{1}{4}}\big)$. The value of this term is the least real number $x\!\in\![0,1]$ such that $x\geq \frac{1}{4}$ and $x\!=\! (2x -x^2)^2$, which is 
$\frac{3-\sqrt{5}}{2}$.  On the positive side, one can adapt the result of Proposition \ref{proposition:mu-term:arithmetic}, using an embedding into the first order theory of real-closed fields, to prove that, although irrational, the values of \L ukasiewicz terms extended with (co)multiplications are always algebraic. In a related direction, a fixed-point logic combining  \L ukasiewicz and (co)multiplication connectives, but based on the Brouwer fixed-point theorem (see discussion in Section \ref{section_lmu}), is investigated by Spada in \cite{Spada2008b}.

Overall, the {\L}ukasiewicz modal $\mu$-calculus of this paper seems to offer a good balance between expressivity (e.g., it can encode  $\PCTL$) and 
algorithmic  approachability. In addition, the first author has recently shown in \cite{MIO2014a} that the process equivalence for probabilistic nondeterministic transition systems characterised by $\Lukmu$ formulas is the standard notion of \emph{convex bisimilarity}~\cite{S95}. Thus the quantitative approach to probabilistic $\mu$-calculi may be considered equally suitable as a mechanism for 
characterising process equivalence as the  non-quantitative $\mu$-calculi 
advocated for this specific purpose in~\cite{CPN99} and~\cite{DvG2010}.

One further result of \cite{MioThesis} is an interpretation of the  {\L}ukasiewicz modal $\mu$-calculus, indeed its extension with (co)multiplication, in terms of a generalized notion of two-player stochastic game. However, although naturally interpreting the meaning of the connectives $\{\sqcup,\sqcap,\times,\otimes\}$, the game semantics given to  the connectives of  strong disjunction $\oplus$ and conjunction $\odot$ is indirect and intricate. An interesting direction for future research is to design an alternative game semantics that offers a more natural interpretation to  the connectives of the {\L}ukasiewicz $\mu$-calculus.

Another problem is to find sound and complete axiomatizations of the {\L}ukasiewicz modal $\mu$-calculus. Such an axiomatization could, for example, constitute a valuable tool for proving, or disproving, a long standing open problem in the literature: the decidability of the satisfiability property for \PCTL\ formulas. Axiomatizing the {\L}ukasiewicz modal $\mu$-calculus appears to be a difficult problem as similar results (e.g., axiomatization of Kozen's $\mu$-calculus) required significant efforts to be solved~\cite{Wal2000}.
A complete axiomatization of {\L}ukasiewicz fuzzy logic is presented in \cite{RMV2011}. It might serve as a basis for developing an axiomatization of  {\L}ukasiewicz $\mu$-calculus (i.e., without modalities) and, in another direction, of the fixed-point free fragment of {\L}ukasiewicz modal $\mu$-calculus. 

Further research will have to explore the relations between quantitative $\mu$-calculi such as $\Lukmu$ and other frameworks for verification and design of probabilistic systems. Important examples include the \emph{abstract probabilistic automata} of~\cite{DKLLPSW}, the compositional \emph{assume-guarantee} techniques of~\cite{KNPQ2010,FKNPQ2011} and the recent \emph{p-automata}  of~\cite{pautomata2012}. In particular, with respect to the latter formalism, we note that the acceptance condition of p-automata is specified in terms of stochastic games whose configurations may have preseeded threshold values whose action closely resembles that of the threshold modalities considered in this work (Definition \ref{thresholds_encoding}). Exploring the relations between p-automata games and $\Lukmu$-games~\cite{MioThesis} could be a fruitful topic for investigation.

\paragraph{Acknowledgements}
We thank Kousha Etessami, Kyriakos Kalorkoti, Enrico Marchioni, Grant Passmore, Phil Scott, Luca Spada, Colin Stirling and the anonymous reviewers for helpful comments  and for pointers to the literature.

\bibliography{biblio}
\bibliographystyle{abbrv}

\end{document}